\DeclareMathOperator{\Mean}{mean}
\theoremstyle{plain}
\theoremstyle{plain}
\newtheorem{prop}{Proposition}
\newcommand{\new}[1]{{#1}}
\newcommand{\ignore}[1]{{{\color{yellow} }}}
\renewcommand{\arraystretch}{1.5} 
\definecolor{blue-green}{rgb}{0.0, 0.87, 0.87}
\begin{document}

\title{Generative AI Enabled Robust Sensor Placement in Cyber-Physical Power Systems: A Graph Diffusion Approach}

\author{Changyuan Zhao, Guangyuan Liu, Bin Xiang,
Benoit Delinchant, Dong In Kim, \textit{Life~Fellow, IEEE}
\thanks{C. Zhao is with the College of Computing and Data Science, Nanyang Technological University, Singapore, and CNRS@CREATE, 1 Create Way, 08-01 Create Tower, Singapore 138602 (e-mail: zhao0441@e.ntu.edu.sg).
}
\thanks{G. Liu is with the College of Computing and Data Science, Nanyang Technological University, Singapore (e-mail: liug0022@e.ntu.edu.sg; dniyato@ntu.edu.sg).}
\thanks{B. Xiang is with Suzhou Institute for Advanced Research, University of Science and Technology of China (e-mail: bin.xiang@ustc.edu.cn)}
\thanks{B. Delinchant is with Univ. Grenoble Alpes, CNRS, Grenoble INP, G2Elab, 38000 Grenoble, France (e-mail: benoit.delinchant@grenoble-inp.fr)}
\thanks{D. I. Kim is with the Department of Electrical and Computer Engineering, Sungkyunkwan University, Suwon 16419, South Korea (e-mail: dongin@skku.edu).}
}

\maketitle
\vspace{-1cm}

\begin{abstract}

With advancements in physical power systems and network technologies, integrated Cyber-Physical Power Systems (CPPS) have significantly enhanced system monitoring and control efficiency and reliability. This integration, however, introduces complex challenges in designing coherent CPPS, particularly as few studies concurrently address the deployment of physical layers and communication connections in the cyber layer. This paper addresses these challenges by proposing a framework for robust sensor placement to optimize anomaly detection in the physical layer and enhance communication resilience in the cyber layer. We model the CPPS as an interdependent network via a graph, allowing for simultaneous consideration of both layers. Then, we adopt the Log-normal Shadowing Path Loss (LNSPL) model to ensure reliable data transmission. Additionally, we leverage the Fiedler value to measure graph resilience against line failures and three anomaly detectors to fortify system safety. However, the optimization problem is NP-hard. Therefore, we introduce the Experience Feedback Graph Diffusion (EFGD) algorithm, which utilizes a diffusion process to generate optimal sensor placement strategies. This algorithm incorporates cross-entropy gradient and experience feedback mechanisms to expedite convergence and generate higher reward strategies. Extensive simulations demonstrate that the EFGD algorithm enhances model convergence by 18.9\% over existing graph diffusion methods and improves average reward by 22.90\% compared to Denoising Diffusion Policy Optimization (DDPO) and 19.57\% compared to Graph Diffusion Policy Optimization (GDPO), thereby significantly bolstering the robustness and reliability of CPPS operations.

\end{abstract}
\begin{IEEEkeywords}
Generative AI, cyber-physical power system, sensor placement, diffusion model.
\end{IEEEkeywords}
\IEEEpeerreviewmaketitle

\section{Introduction}\label{intro}



In previous years, advances in control theory and infrastructure improvements within power grids have significantly improved the efficiency and reliability of monitoring and controlling the physical power system \cite{fang2011smart}. 
Simultaneously, computer science and electronics techniques are enhancing cyber systems to improve the performance of computing and communication technologies. 
These parallel advances in both the physical power system and cyber systems are merging to form an integrated Cyber-Physical Power System (CPPS), which promises to revolutionize the management and operation of modern power grids \cite{sridhar2011cyber}.
CPPS is a novel system that integrates the internet and physical power system components, encompassing all aspects of electric power systems, including generation, transmission, distribution, and utilization \cite{yohanandhan2020cyber,huang2017smart}.
Nowadays, CPPS has been widely applied to various safety-critical power grid scenarios.
Especially in power grid safety control, CPPS has a faster processing capability to ensure the overall safe operation of the power grid through cyber network analysis. 
Correspondingly, this also presents additional requirements from both physical and cyber perspectives for deploying components in the CPPS \cite{lyu2019safety}.

Anomaly detection is essential to prevent potential system failures and ensure the safety of CPPS.
Recent progress in academia has significantly advanced the development of CPPS for anomaly detection.
Li et al. \cite{li2021dynamic} 
proposed an online anomaly detection method to accurately detect when an electrical component has failed based on a graph structure algorithm. 
With the rise of Artificial Intelligence (AI) technology, the CPPS anomaly detection framework has extensively incorporated machine learning and deep learning methods. For instance, Niu et at.
\cite{niu2019dynamic}
combined a long short term memory network and a convolutional neural network to develop a time-series
anomaly detector for data injection attack detection. 
For these anomaly detection methods,
efficient acquisition of real-time grid data requires advanced deployment of physical hardware.
Hooi et al. \cite{hooi2019gridwatch} proposed an approach for sensor placement to maximize the probability of detecting anomalies with limited equipment.

In addition to deploying hardware, it is critical for the cyber layer to maintain communication between edge devices during disruptions caused by various destruction, including physical failures, natural disasters, and malicious attacks, ensuring the proper functioning of the system.
The cyber layer's robustness typically depends on the topology structure of the network, which can be assessed using various metrics.
For instance, Schneider et al. \cite{schneider2011mitigation} proposed a robustness metric based on the percolation theory, which considers the maximal connected subgraphs after the repeated removal of the highest degree node. According to this robustness, Qiu et al. presented the ROSE, a robustness strategy for scale-free wireless sensor networks \cite{qiu2017rose}.
Moreover, Zhang et al. proposed the r-robustness of the networks guaranteeing connectivity even if some nodes are removed \cite{zhang2015notion}.

Despite the significant progress, few papers investigate physical layer deployment and cyber layer communication connection simultaneously, i.e., \textit{co-design a secure sensor placement strategy for anomaly detection and a robust communication protocol}.
There are two difficult challenges that need to be addressed to resolve this problem
\begin{itemize}
    \item \textbf{Challenge 1}:
     Anomaly detection in CPPS needs to prioritize accuracy with limited resources \cite{yohanandhan2020cyber}. 
     First, a limited number of sensors should be placed at the most critical nodes to ensure effective data extraction.
    Moreover, dispersing sensors will significantly increase the monitoring of every part of the power grid, but the connectivity and robustness of the network will be limited due to long wireless communication links. On the other hand, concentrating sensors will ensure the network's connectivity but affect the detection accuracy \cite{xu2012efficient}.
    Consequently, an effective sensor placement must consider both the physical and cyber layers to find a trade-off.
    \item \textbf{Challenge 2}:
    Sensor placement and robust network connection problems are typically challenging to solve. 
    It has been proven that many optimizations that both optimize the node selection and the robustness of networks are non-deterministic polynomial-time hardness (NP-hard) \cite{zhang2015notion}.
    Hence, many current solution algorithms are based on greedy algorithms or traditional optimization methods, which do not guarantee the optimal placement strategy \cite{clark2018greedy}. 
    Therefore, data-driven approaches, e.g., machine learning methods, become a preferred solution over mathematical model-based approaches to explore the global optimal solution.



\end{itemize}

In this paper, we present an efficient framework for placing sensors for anomaly detection and guaranteeing robustness under link failures in CPPS.
First, we consider a one-to-one interdependent CPPS, where each node in the physical layer is controlled by one cyber node \cite{wu2021cyber}.
Next, we examine the edge robustness of the cyber layer, which refers to the ability to remain connected even when some edges are failed and disconnected.
To measure it, we introduce two metrics: the Cheeger constant and the Fiedler value \cite{de2007old}. These metrics are mathematical indicators in graph theory that measure the edge connectivity of a graph, which in turn helps assess the robustness of the cyber layer. For the physical layer, we utilize power detectors for anomaly detection introduced in GridWatch \cite{hooi2019gridwatch}. 
Based on considerations at both the cyber layer and the physical layer, we formulate the robust sensor placement problem as an optimization to maximize the robustness of the cyber layer while ensuring accurate and effective anomaly detection in the physical layer.
We utilize the Reinforcement Learning (RL) framework to efficiently find near-optimal solutions, which is suitable for the solution space that is vast and not easily navigable through traditional optimization techniques \cite{li2017deep}.
Additionally, diffusion-based policy RL algorithms have demonstrated state-of-the-art performance, particularly in network optimization contexts \cite{du2024enhancing}. The diffusion model employs a unique combination of diffusion and denoising processes to effectively explore the search space to navigate through potential solutions, progressively refining decisions through the denoising phase to converge on optimal strategies.
To solve the long time for the diffusion model convergence, we proposed
an Experience Feedback Graph Diffusion (EFGD) policy optimization approach to solve the proposed secure sensor placement optimization. 
The proposed framework utilizes reinforcement learning from human feedback (RLHF) to improve the convergence speed of training by leveraging prior exploration feedback, which informs the optimization strategy in moving closer to the optimal outcome \cite{du2024reinforcement}.
Our main contributions are summarized as follows.
\begin{itemize}
    \item 
    By using graph theory, we model
    a one-to-one interdependent CPPS \cite{kong2020routing}
    through the gird graph and communication graph with shared vertices, representing the potential location to place sensors.
    We introduce the Cheeger constant and the Fiedler value to measure the robustness of the cyber layer communications under link failures.
    Moreover, we utilize anomaly detectors in the physical layer to detect the abnormal power information of the power grid.
    \textit{To the best of our knowledge, this is the first work that considers both the physical layer grid and the cyber layer network robustness simultaneously in CPPS for anomaly detection.
    } 

    \item 
    Based on the CPPS model, we formulate an optimization to maximize the robustness of the cyber layer network while ensuring accurate and effective anomaly detection in the power grid.
    We prove that the formulated problem is NP-hard, making it difficult to solve using existing optimization methods. 
    Inspired by the diffusion model-based optimization framework \cite{du2024enhancing}, we design an optimization framework utilizing graph diffusion to optimize the placement policy via the denoising process.



    \item 
    Due to the long time for the diffusion model convergence, we propose the EFGD policy optimization algorithm, which adopts the cross-entropy gradient and introduces experience feedback into the training. Instead of using the negative log-likelihood gradient, the propped EFGD approach utilizes the cross-entropy gradient, allowing it to explore higher rewards and bring the predicted distribution closer to the latent distribution. Moreover, the inclusion of experiment feedback in the EFGD method can significantly improve the convergence during training.

\end{itemize}

The rest of the paper is organized as follows. 
Section \ref{sec:related} reviews related works.
The CPPS system model consisting of both physical layer and cyber layer is presented in Section \ref{sec:model}.
In Section \ref{sec:problem}, we first formulate the robust sensor placement optimization problem for anomaly detection.
Then, we elaborate on the design of EFGD to solve the optimization problem effectively.
Section \ref{sec:exp} provides and discusses the simulation results. Finally, Section \ref{sec:con} summarizes the paper.










\section{Background and Related Work}
\label{sec:related}

Interdependent CPPS frameworks are crucial for modernizing and enhancing power infrastructure robustness through the integration of computational and physical processes. A common application is the one-to-one interdependent CPPS, where each physical layer node in the physical layer is controlled by a cyber node \cite{wu2021cyber}.
The physical layer of the CPPS comprises power consumption and/or production nodes connected with lines. 
Some selected nodes in the physical layer function as a sensor that collects and monitors the status information, such as current and voltage. Through one-to-one interdependency, each sensor transmits the collected information to the corresponding information transmission unit. Subsequently, the transmission unit of each node sends the information to the unified power grid control center via the cyber layer's communication link. Finally, the power grid control center processes and analyzes the collected data and controls the CPPS based on the analysis results.

\begin{figure}[htbp]
    \centering
    \includegraphics[width= 0.95\linewidth]{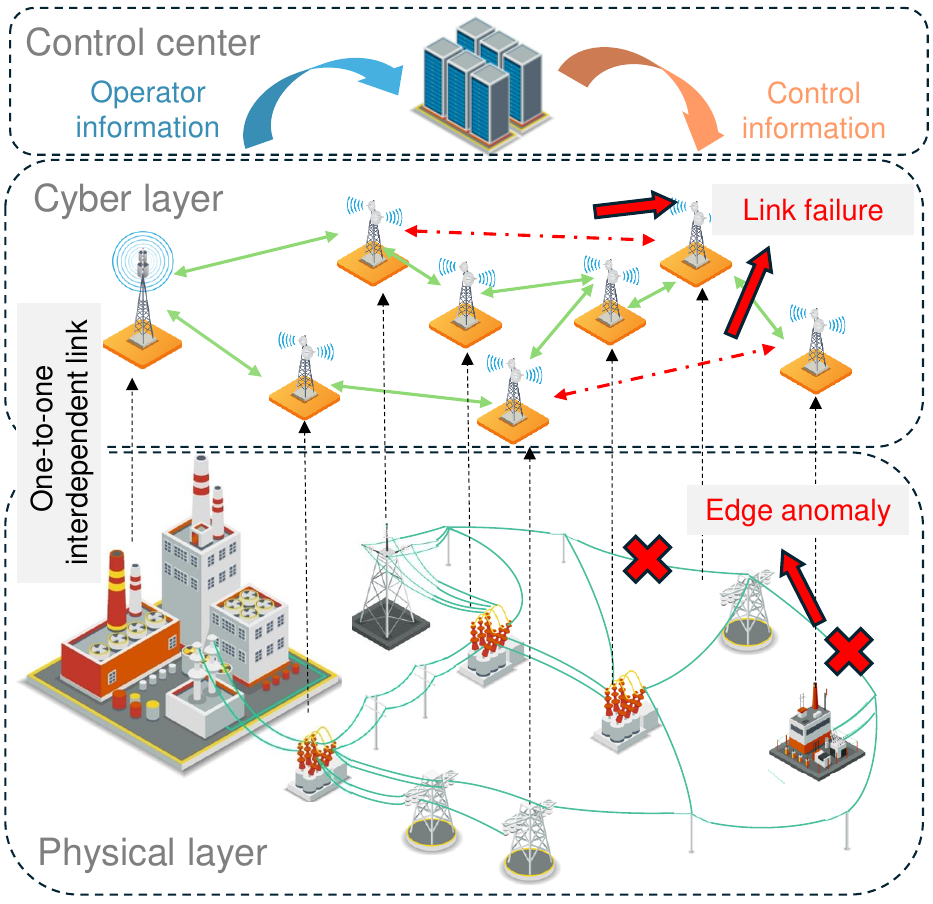}
    \caption{The system model of interdependent CPPS.
    \textit{Part A}. The illustration of the physical layer grid with edge anomaly. \textit{Part B}. The cyber layer network with link failures. \textit{Part C}. The control center processes data and controls the CPPS.
    }
    \label{fig:cpps}
\end{figure}


\subsection{Anomaly Detection in Physical Layer}

Anomaly detection for the physical layer of CPPS aims to identify changes in the physical information of the circuit, such as voltage and current, to achieve grid monitoring.
Time series anomaly detection is a common method used in practice focusing on the temporal characteristics of circuit information \cite{keogh2007finding}.
For multivariate time series, 
there are numerous techniques to achieve detection, including convolutional neural networks \cite{yi2017grouped}, autoencoder models \cite{yang2025hybrid}, distance-based models \cite{ramaswamy2000efficient}, and isolation forests \cite{liu2008isolation}.
Additionally, temporal graph anomaly detection utilizes the topology structure 
to find anomalous changes in the grid, such as neighborhood-based  \cite{akoglu2010oddball} and community-based approaches \cite{chen2012community}. For dynamic graphs \cite{li2021dynamic}, the authors in \cite{akoglu2010event} found change points, while other researchers leveraged partition-based \cite{aggarwal2011outlier} and sketch-based \cite{ranshous2016scalable} approaches.
However, almost all these methods require fully observed data from deployed sensors and only seldom consider sensor selection or placement for effective detection \cite{hooi2019gridwatch}.
Moreover, the papers on sensor placement only consider one aspect of the grid system, such as state estimation \cite{li2011phasor}, without considering the transmission simultaneously.
Motivated by this situation, we intend to provide a sensor placement approach for effective anomaly detection and enhance the transmission robustness simultaneously.

\new{


\subsection{Robust Network in Cyber Layer}

The primary function of the CPPS cyber layer is to transmit, calculate, and collect data. Therefore, robust networks prioritize maintaining specific operations of an entire network, even in the face of disruptions such as link failures or node damage. 
Researchers usually use graphs to illustrate the topological structure of networks, utilizing various graph properties to verify and enhance network robustness \cite{koster2013robust}. 
Based on the percolation theory, the authors in \cite{schneider2011mitigation} proposed a robustness metric to measure network robustness under node failures. Additionally, by utilizing this robustness metric, a robust generation strategy for wireless communication networks is proposed, which significantly enhances the robustness of the network against nodes cyberattacks \cite{qiu2017rose, qiu2019robustness}.
Besides node failures, the authors in \cite{egeland2009availability} measured the availability of planned networks by analyzing the number of redundant nodes. This method enhances network reliability for any given topology by increasing redundant nodes. 
However, this approach is less applicable in resource-limited settings, as it improves reliability primarily by adding redundancy without accounting for strict deployment budget constraints.



Therefore, we utilize spectral metrics, including the Cheeger constant and the Fiedler value, to assess network robustness under link failures~\cite{de2007old}.
Unlike degree-based or redundancy-based measures, these spectral quantities characterize the intrinsic connectivity of a graph and directly reflect the presence of bottlenecks and weakly connected components~\cite{chung1997spectral}.
In particular, the Cheeger constant quantifies the minimum cut of a network normalized by the node volume, providing a principled measure of how easily the graph can be disconnected~\cite{ghosh2006growing}.
Although computing the Cheeger constant exactly is NP-hard, Cheeger’s inequality establishes tight bounds between the Cheeger constant and the second smallest eigenvalue of the graph Laplacian, i.e., the Fiedler value, enabling efficient approximation with theoretical guarantees \cite{de2007old}.

Compared with other spectral properties such as the largest eigenvalue or spectral radius, which mainly capture global density or diffusion speed, the Fiedler value explicitly characterizes algebraic connectivity and is directly linked to edge-disconnection robustness~\cite{van2023graph}.
Moreover, maximizing the Fiedler value has been shown to improve resilience against link removals by promoting balanced connectivity across the network, rather than concentrating links locally~\cite{olfati2007algebraic}.
These properties make the Fiedler value particularly suitable for modeling robustness under link failures while remaining computationally tractable for large-scale graphs.
As a result, the combination of Cheeger’s inequality and the Fiedler value provides a theoretically grounded and practically efficient surrogate for robustness optimization, allowing a principled trade-off between network structure and deployment cost.

\subsection{Sensor Placement and Network Co-design}

The primary objective of sensor placement is to ensure accurate and timely acquisition of system states while operating under limited deployment budgets and communication resources~\cite{akyildiz2002survey, younis2008strategies}. Classical studies have formulated sensor placement as a coverage or observability problem, aiming to maximize sensing quality through carefully selected node locations~\cite{amundson2009survey}. Early studies have typically focused on single-layer optimization, where sensors are deployed to improve monitoring accuracy, state estimation, or anomaly detection performance without explicitly considering communication constraints~\cite{11311556}.
To address practical deployment requirements, various optimization-based and heuristic approaches have been proposed. For example, greedy and submodular methods have been widely adopted to maximize sensing coverage or detection probability under cardinality constraints~\cite{bhuiyan2014sensor}. However, these approaches often struggle with the vast solution spaces of complex power grids, frequently converging to local optima~\cite{sahu2022optimal}. In parallel, bio-inspired algorithms have been introduced to enhance scalability in large-scale networks~\cite{10.1016/j.jnca.2015.09.013}, yet they predominantly treat sensing and networking as decoupled components.

Recent studies have investigated the joint optimization of sensing and communication~\cite{iqbal2015wireless}. Such co-design frameworks aim to balance sensing accuracy and network lifetime~\cite{wang2011coverage}. Nevertheless, most existing solutions rely on redundant node deployment to improve robustness, which often leads to increased infrastructure costs and limited flexibility~\cite{sara2014routing}. Moreover, simple redundancy does not explicitly account for structural robustness under stochastic link failures, nor does it provide principled control over network topology.
Motivated by these limitations, we adopt a graph-theoretic perspective to jointly model sensor placement and network connectivity. Rather than merely increasing node density, we leverage spectral metrics, including the Cheeger constant and the Fiedler value~\cite{de2007old}, to characterize the robustness of the cyber layer. These measures enable a principled trade-off between sensing performance and structural robustness, providing a unified foundation for sensor placement and network co-design.

}

\new{

\subsection{RL-based Network Optimization}

Many network optimization problems in wireless and cyber-physical systems are inherently NP-hard due to their combinatorial decision spaces and coupled constraints~\cite{nguyen2018mobile}.
DRL has therefore been widely adopted as a data-driven alternative to conventional optimization methods, enabling agents to learn adaptive policies directly from environmental interactions~\cite{mao2018deep}.
Representative studies have applied DRL to power control, routing, and resource allocation, demonstrating its effectiveness in handling large state spaces and complex dynamics where analytical solutions are intractable~\cite{luong2019applications}.
Due to the topological nature of communication networks, network optimization problems are often represented as graphs.
To exploit such structural information, recent work integrates graph representations with DRL, enabling policies to capture node dependencies and improve generalization across different network instances~\cite{almasan2022deep}.
More recently, generative AI-based RL has emerged as a promising paradigm by directly modeling the distribution of optimal solutions rather than producing actions in a step-wise manner~\cite{wang2024generative2, zhao2025secdiff, wang2025generative2}.
Among them, diffusion models have attracted significant attention, as they learn data distributions by gradually injecting noise and recovering clean samples through a denoising process \cite{ho2020denoising}.
Owing to their strong representation capability, diffusion models have been widely applied to image, video, and audio generation \cite{yang2023diffusion}, and have recently been extended to decision-making by integrating the denoising process with DRL frameworks \cite{wang2024generative}.

In wireless networks, Du et al. \cite{du2024enhancing} proposed a diffusion model-based RL framework that formulates network control as a generative denoising process, achieving superior scalability and performance compared with conventional DRL methods \cite{zhao2024generative, sun2026generativeintentpredictionagentic}.
Motivated by the graph-structured nature of communication systems, Liu et al. \cite{liu2024graph} further introduced Graph Diffusion Policy Optimization (GDPO), which applies diffusion models to graph generation and enables constraint-aware network optimization \cite{wang2024empowering}.
However, limited by the slow convergence of diffusion models, these approaches typically require long training times to progressively refine noisy solutions toward high-quality graphs \cite{vecerik2017leveraging}.

Motivated by this limitation, in this paper, we propose a diffusion-based policy optimization framework with experience feedback, which explicitly incorporates high-reward trajectories into the denoising process.
By guiding each denoising step toward promising solution regions, the proposed approach significantly accelerates convergence while preserving the global exploration capability of diffusion models.

}
\section{System Model: Interdependent Cyber-Physical Power Systems
for Anomaly Detection}
\label{sec:model}

In this section, we present our interdependent CPPS model designed for accurate anomaly detection and robust data transmission.
\new{
For reader convenience, Table~\ref{tab:notation} summarizes the key symbols and parameters used throughout this paper.
}

\begin{table*}[t]
\centering
\caption{\new{Key notations and parameters.}}
\label{tab:notation}
\renewcommand{\arraystretch}{1.1}
\setlength{\tabcolsep}{6pt}
\begin{tabular}{ll ll}
\toprule
\textbf{Symbol} & \textbf{Description} &
\textbf{Symbol} & \textbf{Description} \\
\midrule

$A(k,M)$ & Overall anomaly score
& $a_i(k)$ & Node anomaly score \\

$BP L(d_0)$ & Reference path loss
& $\beta$ & Feedback weight \\

$D$ & Distance matrix
& $\Delta S_{i,e}(k)$ & Power variation \\

$E_C$ & Candidate communication links
& $G_C$ & Cyber sensor network graph \\

$G_P$ & Physical power grid graph
& $h(L_{G_C})$ & Cheeger constant \\

$I_e(t)$ & Line current
& $l_{i,j}$ & Link indicator \\

$L_{G_C}$ & Graph Laplacian
& $\lambda_2$ & Second smallest eigenvalue \\

$\lambda_a$ & Anomaly threshold
& $\lambda_c$ & Communication threshold \\

$\lambda_s$ & Safety threshold
& $N$ & Sensor budget \\

$N_t$ & Training iterations
& $P_n$ & Noise power \\

$PL(v_i,v_j)$ & Path loss between nodes $v_i$ and $v_j$
& $P_t(v_i)$ & Transmit power \\

$p_\theta$ & Denoising network
& $Q_t^V,Q_t^E$ & Diffusion transitions \\

$r(G_0)$ & Final reward
& $r_1,r_2,r_3$ & Reward weights \\

$S=(V_S,E_S)$ & Sensor placement solution
& $S_a(V_S)$ & Detection score \\

$S_{i,e}(t)$ & Edge power measurement
& $SNR(v_i,v_j)$ & Link SNR \\

$V$ & Node set
& $V_i(t)$ & Node voltage \\

$X_i(k)$ & Combined detector output
& $X_\sigma$ & Shadowing term \\

$x_{GA,i}$ & Group-anomaly detector
& $x_{GD,i}$ & Group-diversion detector \\

$x_{SE,i}$ & Single-edge detector
& $|\mathcal{B}|$ & Replay buffer size \\

$|\mathcal{D}|$& Trajectory set size
& $\gamma$ & Path-loss exponent \\

\bottomrule
\end{tabular}
\end{table*}



\subsection{Interdependent Cyber-Physical Power System Model}

The proposed CPPS consists of a physical layer, which is a transmission grid, and a cyber layer modeled by a wireless sensor network (WSN).
As shown in Fig. \ref{fig:cpps}, we consider a one-to-one interdependent CPPS model \cite{wu2021cyber}. 
The physical layer is formally represented by a graph $\mathcal{G}_P = (\mathcal{V}_P, \mathcal{E}_P)$, where $\mathcal{V}_P$ denotes the set of grid nodes in which the sensors can be placed, and $\mathcal{E}_P$ is the set of nodes connectors (electrical lines), as shown in Fig. \ref{fig:cpps} \textit{Part A}.
Similarly, we model the WSN cyber layer by another graph $\mathcal{G}_C = (\mathcal{V}_C, \mathcal{E}_C)$, where $\mathcal{V}_C$ represents the set of potential locations for data transmission units, and $\mathcal{E}_C$ denotes the set of corresponding communication links (Fig. \ref{fig:cpps} \textit{Part B}). 
To achieve effective one-to-one interdependent transmission, we assume the gird sensor in the physical layer and its corresponding data transmission unit in the cyber layer are placed on the same node. 
Thus, the choice of nodes where sensors may be deployed should also be the same, i.e., $\mathcal{V}_P = \mathcal{V}_C$. For convenience, we use $\mathcal{V}$ jointly to represent this set of vertices where sensors can be placed. 
Moreover, we consider one power grid control center in the CPPS, which monitors the operation of the CPPS and sends control instructions
through the connection with the cyber layer.
(Fig. \ref{fig:cpps} \textit{Part C}).

\subsection{Cyber Layer Model for Robust Communication}

In this subsection, we will present our cyber layer model for robust communication.

\subsubsection{Communication Link}

We first consider reliable communication links between access nodes within the cyber layer, which can be measured by various metrics. In this paper, we utilize the Log-Normal Shadowing Path Loss (LNSPL) model to assess the communication quality of links, which can be generalized to various environments, including indoor, inter-vehicular, and near-ground scenarios \cite{kurt2017path}.


In the cyber layer $\mathcal{G}_c=(\mathcal{V}, \mathcal{E}_C)$, we use a matrix $D=\{d_{i,j}\}$ to denote the distance between each node, where $d_{i,j}$ represents the physical distance between nodes $v_i$ and $v_j$ in the set of vertices $\mathcal{V}$.
Accordingly, the path loss $PL$ of the communication link between $v_i$ and $v_j$ can be expressed as~\cite{sandoval2017improving}
\begin{equation}
\label{eq:lnspl}
PL(v_i,v_j) = BPL(d_0) + 10\cdot \gamma\cdot\log_{10}(\frac{d_{i,j}}{d_0})+X_{\sigma},
\end{equation}
where $BPL(d_0)$ represents the reference path loss of distance $d_0$, $\gamma$ is the path loss exponent reflecting the rate at which the signal attenuates with distance, and $X_\sigma$ is a zero mean Gaussian random variable with the variance of $\sigma$, denoting the shadow fading.
In this model, 
the signal-to-noise ratio $SNR(v_i,v_j)$ from the transmitter $v_i$ is given by
\begin{equation}
\label{eq:snr}
    SNR(v_i,v_j) = P_t(v_i) - PL(v_i,v_j) - P_n,
\end{equation}
where $P_t$ and $P_n$ represent the transmit power and noise power. Since we are in the placement stage, we assume that the transmission power of each sensor consistently reaches the minimum $P_t^m$ and the noise power remains uniform across all communication links at the worst case $P_n^M$.
Consequently, the SNR we calculated is based on the worst transmission conditions, serving as a lower bound to ensure safe operation after placement.
Under this assumption, to maintain reliable communication links, the communication link is active between two nodes only if the value of the signal-to-noise ratio is high, i.e., the path loss does not exceed a certain threshold.
Given a threshold $\lambda_c$, the communication state $l_{i,j}$ between nodes $v_i$ and $v_j$ can be written as
\begin{equation}
\label{eq:linkcons}
    l_{i,j}=
    \begin{cases}
        1, ~~~PL(v_i,v_j)\leq\lambda_c\wedge(v_i,v_j) \in\mathcal{E}_C,\\
        0, ~~~\text{otherwise,}     
    \end{cases}
\end{equation}
where $l_{i,j} = 1$ indicates the activated link, $l_{i,j} = 0$ represents the link is not activated, and $(v_i,v_j) \in\mathcal{E}_C$ indicates the link is in the set of potential communication links, as demonstrated in Fig. \ref{fig:pattern} \textit{Part B}.




\subsubsection{Robust Network}

In this part, we present several critical metrics for evaluating the robustness of communication networks.
We consider the robustness of the cyber layer to ensure that network communication remains reliable and functional. Specifically, we measure the robustness of the network as a capability to maintain connectivity even when some of its links fail 
\cite{liu2022network}.
In this paper, we leverage the Cheeger constant of the graph as the performance indicator. The Cheeger constant, also known as the Isoperimetric number, measures the weak connections in a graph, where a higher Cheeger constant indicates better connectivity and fewer bottlenecks \cite{mohar1989isoperimetric}.

Considering the cyber layer graph $\mathcal{G}_C = (\mathcal{V}, \mathcal{E}_C)$, $E_{C} = \{\alpha_{i,j}\}_{n \times n}$ is a 0-1 adjacency matrix of $\mathcal{G}_C$, where $n$ represents the number of vertices, i.e., $n=\vert\mathcal{V}\vert$, and $\alpha_{i,j}$ equals 1 if vertices $v_i$ and $v_j$ are connected, and 0 otherwise.
$D = diag\{\beta_{1},\ldots,\beta_{n}\}$ is the degree matrix of the cyber layer, where $\beta_{i} = \sum_{1\leq j\leq n, j\neq i}\alpha_{i,j}$ represents the degree of vertex $v_i \in \mathcal{V}$.
Given the adjacency matrix $E_{C}$ and the degree matrix $D$, the Laplacian matrix of graph $\mathcal{G}_C$ can be expressed as $L_{\mathcal{G}_C} = D - E_C$.
For the normalized Laplacian matrix
$\mathcal{L_{\mathcal{G}_C}}=D^{-1/2}L_{{\mathcal{G}_C}} D^{-1/2}$,
the Cheeger constant in spectral graph
theory is defined as 
\cite{chung1997spectral}:
\begin{equation}
    h(\mathcal{L}_{{\mathcal{G}_C}}) = \min_{\mathcal{Y}}\frac{\sum_{i\in \mathcal{Y}, j \in \overline{\mathcal{Y}}} \alpha_{i,j}}{\min\{vol(\mathcal{Y}), vol(\overline{\mathcal{Y}})\}},
\end{equation}
where $h(\mathcal{L}_{{\mathcal{G}_C}})$ indicates the Cheeger constant of graph $\mathcal{L}_C$, $\mathcal{Y}\subset \mathcal{V}$ is a subset of the nodes and $vol(\mathcal{Y})=\sum_{i\in \mathcal{Y}}\beta_i$ denotes the volume of node set $\mathcal{Y}$.
However, the computation of the Cheeger constant for a given graph is NP-hard, leading to a substantial computational complexity in the
subsequent sensor placement \cite{yoshida2019cheeger}.

Therefore, we utilize the bounds of the Cheeger constant to provide an approximate guarantee, which can be computed by the Cheeger's inequality.
The Cheeger's inequality can be written as \cite{de2007old}:
\begin{equation}
    \lambda_2(\mathcal{L}_{{\mathcal{G}_C}})/2 \leq h(\mathcal{L}_{{\mathcal{G}_C}}) \leq \sqrt{2\lambda_2({\mathcal{L}_{{\mathcal{G}_C}}})},
\end{equation}
where $\lambda_2(\mathcal{L}_C)$ denotes the second smallest eigenvalue of the network Laplacian matrix $\mathcal{L}_C$.
Let $\bm{0}$ and $\bm{1}$ represent the vectors with all coordinates equal to 0 and 1, respectively, 
the eigenvalue can be computed by \cite{de2007old}:
\begin{equation}
\label{eq:fisher}
    \lambda_2(\mathcal{L}_{{\mathcal{G}_C}}) = \min_{v\neq\bm{0},v\perp \bm{1}} \frac{\langle\mathcal{L}_{{\mathcal{G}_C}} v,v\rangle}{\langle v,v\rangle}.
\end{equation}
Additionally, the second smallest eigenvalue $\lambda_2(\mathcal{L}_{{\mathcal{G}_C}})$ is referred to as the Fiedler value of the graph, which has a specific connection to the connectivity of the graph \cite{de2007old}, as shown in Fig. \ref{fig:pattern} \textit{Part C}.











\subsection{Physical Layer Model for Anomaly Detection}

In this subsection, we will introduce our detailed physical layer model designed for anomaly detection.

\subsubsection{Sensor Information Collection}

When an anomaly occurs in the physical layer of CPPS, such as a transmission line failing or a grid component failing, the voltages at some nodes and the currents along the edges of the grid will change \cite{eiteneuer2019dimensionality}.
By placing sensors on the grid nodes, the voltage and current changes of the grid can be measured.

To be more specific, consider 
the physical layer graph $\mathcal{G}_P = (\mathcal{V}, \mathcal{E}_P)$.
For a grid sensor deployed at vertex $v_i\in\mathcal{V}$, 
it can measure the voltage $V_i(t)\in\mathbb{C}$ of node $v_i$ at time $t$. Furthermore, it can also measure the current $I_e(t)\in\mathbb{C}$ along the edge $e\in\mathcal{N}_i$, where $\mathcal{N}_i \subseteq \mathcal{E}_P$ denotes the set of edges adjacent to the vertex $v_i$.
Compared with directly analyzing the changes in current and voltage, the power of the node combines the characteristics of both information. Therefore, the state of power can simultaneously indicate the changing characteristics of current and voltage, providing 
better anomaly detection in practice \cite{park2024anomaly}.
With the detected node voltage $V_i(t)$ and the edge current $I_e(t)$, the complex power along the edge $e$ is 
\begin{equation}
    S_{i,e}(t) = V_i(t)\cdot I_e(t)^{*},
\end{equation}
where $^*$ is the complex conjugate. 
In two consecutive detections, we denote the change in power along the edge $e$ detected by sensor $i$ as $\Delta S_{i,e}(t)$.



\subsubsection{Anomaly Detector}

Anomaly can cause the voltage and current on one side to surge or decrease, thus producing complex effects on the entire grid. 
When one of the edges fails, 
the current will be redistributed among the edges of the grid as shown in Fig. \ref{fig:pattern} \textit{Part B}.
According to the current diversion, this redistribution leads to three anomaly patterns: single-edge anomaly, group anomaly, and group-diversion anomaly \cite{hooi2019gridwatch}.

Based on the characteristics of the three types of anomalies, the corresponding three detectors are defined as follows:

\begin{itemize}
    \item \textbf{Single-Edge Detector}:
    This detector focuses on the largest absolute change in power in the edges adjacent to a sensor $v_i$ in the $k$-th time, i.e.,
    \begin{equation}
    \label{eq:det1}
        x_{SE,i}(k) = \max_{e\in\mathcal{N}_i} |\Delta S_{i,e}(k)|.
    \end{equation}
    \item \textbf{Group Anomaly Detector}:
    This detector calculates the sum of all power changes in the edges adjacent to a sensor $v_i$, i.e.,
    \begin{equation}
    \label{eq:det2}
        x_{GA,i}(k) = |\sum_{e\in\mathcal{N}_i}(\Delta S_{i,e}(k))|.
    \end{equation}
    \item \textbf{Group-Diversion Detector}:
    The last detector computes the total absolute deviation of power changes about sensor $v_i$, i.e.,
    \begin{equation}
    \label{eq:det3}
        x_{GD,i}(k) = \sum_{e\in\mathcal{N}_i}|\Delta S_{i,e}(k)-\mathop{\Mean}\limits_{e'\in\mathcal{N}_i}(\Delta S_{i,e'}(k))|.
    \end{equation}
\end{itemize}

According to the results obtained by three detectors following Eqs. \eqref{eq:det1}-\eqref{eq:det3}, we define the total slot detector in at time $k$ as a vector $X_i(k) = [x_{SE,i}(k)~x_{GA,i}(k)~x_{GD,i}(k)]$, combining the outputs of three detectors simultaneously.

\subsubsection{Anomaly Score}

Based on the total detector result $X_i(k)$, 
we define an anomaly score to determine whether there is a fault in the grid.
Considering sensor $v_i$ and the total detector result $X_i(k)$, the anomaly score of this sensor in the $k$-th time can be computed via sensor-level anomalousness defined as follows \cite{hooi2019gridwatch}:
\begin{equation}
\label{eq:sensor-level}
    a_i(k) = \Big\Vert \frac{X_i(k)-\widetilde{\mu}_i(k)}{\widetilde{\sigma}_i(t)}\Big\Vert _{\infty},
\end{equation}
where $\widetilde{\mu}_i(t)$ and $\widetilde{\sigma}_i(t)$ are the historical median and inter-quartile range (IQR) \cite{yule1927introduction} of $X_i(t)$, respectively; the infinity-norm $\Vert\cdot\Vert_\infty$ represents the maximum absolute value.


Moreover, considering only the abnormal score of one sensor measurement cannot reflect the overall abnormal situation since the power of the electrical grid changes dynamically with time. Thus, we define the overall anomaly score of a set of nodes $\mathcal{M}\subseteq\mathcal{V}$ as the maximum sensor-level anomaly score in $\mathcal{M}$, as follows: 
\begin{equation}
    A(k, \mathcal{M}) = \max_{v_i\in\mathcal{M}} a_{i}(k).
\end{equation}
If the overall anomaly score exceeds a given threshold $\lambda_a$, i.e., $A(k,\mathcal{M})>\lambda_a$, the sensors detected an anomaly within this time slot.

For a sensor placement performance evaluation, we need to consider its ability to detect abnormalities within a series of time slots.
Specifically, suppose in a series of time $t_s = \{t_1,t_2,\ldots, t_T\}$, $s$ anomalies occurred at time $R_a = \{r_1,r_2,\ldots,r_s\}$. For the set of nodes $\mathcal{M}$, the anomaly detection score $S_a$ is 
\begin{equation}
\label{eq:anomaly score}
    S_a(\mathcal{M}) = \frac{1}{s}\sum_{i=1}^{s}\mathbb{I}(A(r_i,\mathcal{M})>\lambda_a),
\end{equation}
where $\mathbb{I}(\cdot)$ is the indicator function, and $\lambda_a$ is a given threshold. 

\section{Wireless Sensor Placement via Graph Diffusion Policy Optimization}
\label{sec:problem}



In this section, we formulate the problem of robust wireless sensor placement in CPPS. We then introduce the EFGD policy optimization algorithm, which utilizes the diffusion framework for graph generation and optimizes policies through graph diffusion.


\begin{figure}[htbp]
    \centering
    \includegraphics[width= 0.9\linewidth]{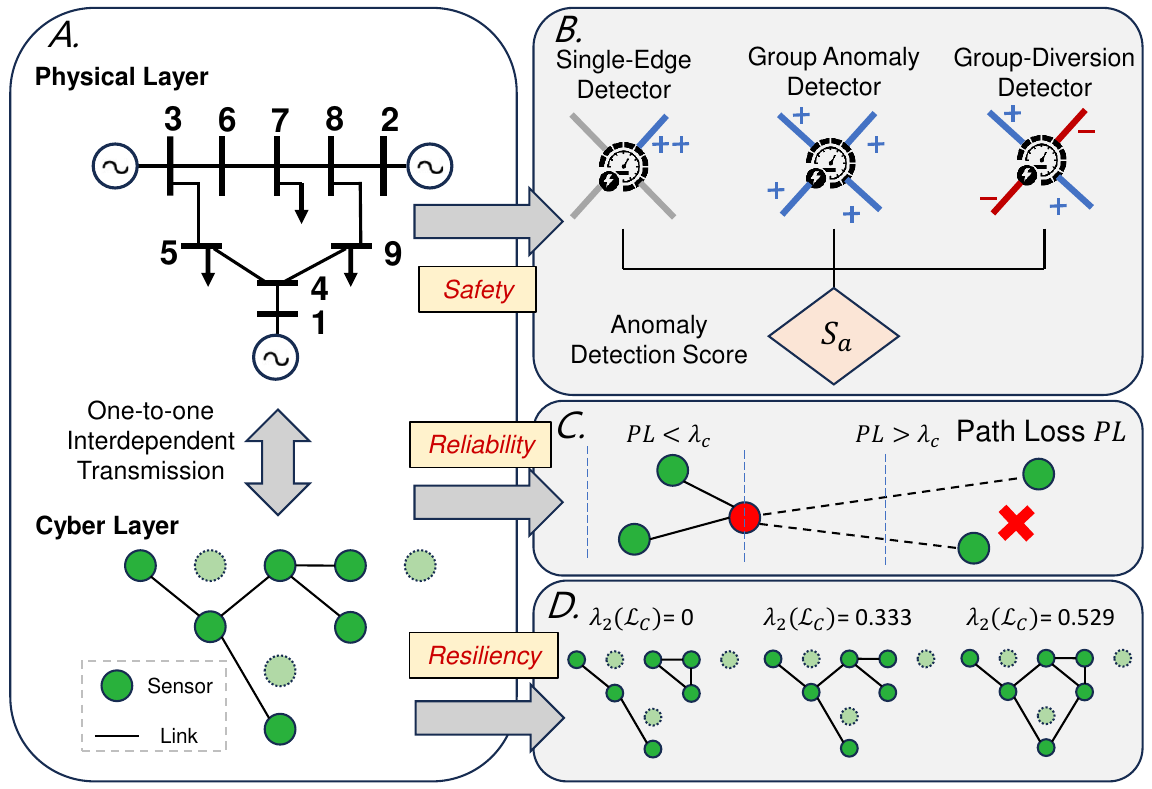}
    \caption{
    The illustration of the optimization objections. \textit{Part A}. The proposed CPPS system based on 
    IEEE 9 bus system. \textit{Part B}. The anomaly detectors focusing on the safety of the system. \textit{Part C}. The communication quality based on LNSPL model targeting the reliability. \textit{Part D}. The Fiedler value of different graphs addressing the robustness of the system.}
    \label{fig:pattern}
\end{figure}

\subsection{Problem Formulation}





According to the system model defined in Section \ref{sec:model}, we formulate the robust sensor placement problem
considering both the physical layer layout and the
cyber layer links simultaneously. Specifically, we aim to optimize a robust sensor placement $\mathcal{S} = (\mathcal{V}_S, \mathcal{E}_S)$, where $\mathcal{V}_S\subseteq\mathcal{V}$ consists of nodes to place sensors, and $\mathcal{E}_S\subseteq\mathcal{E}_C$ indicates the activated links. The objective is to maximize the accuracy of anomaly detection and ensure the network's robustness.

Specifically, the anomaly detection accuracy can be computed by the anomaly detection score $S_a$ in Eq. \eqref{eq:anomaly score}.
According to the LNSPL model in Eq. \eqref{eq:lnspl}, 
the adjacency matrix $E_C$ is composed of the communication state $l_{i,j}$ in Eq. \eqref{eq:linkcons} to guarantee the quality of communication when transmitting operation and control information.
Additionally, the path loss in Eq. \eqref{eq:lnspl} varies over time due to shadowing. We assess the effectiveness of sensor placement under diffusion communication conditions to ensure robustness across different shadowing scenarios.
Moreover, via maximizing the lower bound, the Fiedler value $\lambda_2(\mathcal{L}_{\mathcal{S}})$, computed by Eq. \eqref{eq:fisher}, 
we can approximate the Cheeger constant $h(\mathcal{L}_{\mathcal{S}})$, thereby ensuring the robustness of the communication network.
In summary, our goal is to maximize
the robustness of the cyber layer while ensuring accurate and
effective anomaly detection of CPPS.
\new{
The robust sensor placement problem can be formulated as:

\begin{subequations}
\label{eq:optimization}
\begin{align}
    &~~~~~~\underset{\mathcal{S}}{\max} \quad \lambda_2(\mathcal{L}_{\mathcal{S}})  && \label{eq:maximize} \\
& \text{s.t.}~~~S_a(\mathcal{V}_S)\geq \lambda_s, && \label{eq:T_constraint} \\
& ~~~~~~PL(v_i,v_j)\leq\lambda_c, ~~~(v_i,v_j)\in\mathcal{E}_S, && \label{eq:l_constraint}\\
& ~~~~~~\vert\mathcal{V}_S\vert\leq N, && \label{eq:O_C_3}
\end{align}
\end{subequations}
where $\lambda_s$ and $\lambda_c$ are threshold values determined by the specific safety standards (e.g., detection rate) and communication hardware specifications (e.g., minimum SNR), respectively; $N$ is the maximum number of sensors constrained by the deployment budget. These parameters can flexibly be configured to adapt to diverse operational scenarios and environmental conditions. Detailed settings are discussed in Section~\ref{sec:exp_set}. Notably, Eq.~\eqref{eq:optimization} formulates a deterministic snapshot-based optimization to ensure tractability. 
However, by employing Monte Carlo simulations~\cite{harrison2010introduction} during the solving and verification process, we ensure that the derived placement strategy achieves robustness across the majority of stochastic shadowing scenarios.
}


Although we approximate the Cheeger constant to simplify the constraints, we have the following proposition regarding the NP-hardness of the above optimization problem.
\begin{prop}
\label{prop:nphard}
    The proposed robust sensor placement optimization problem \eqref{eq:optimization} is NP-hard.
\end{prop}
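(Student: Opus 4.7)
The plan is to prove NP-hardness by polynomial-time reduction, first restricting \eqref{eq:optimization} to its hardest combinatorial core. I would consider the special case where $\lambda_s = 0$, rendering constraint \eqref{eq:T_constraint} vacuous, and where every candidate cyber link in $\mathcal{G}_C$ satisfies $PL(v_i,v_j) \leq \lambda_c$, rendering \eqref{eq:l_constraint} vacuous as well. The instance then collapses to: select $\mathcal{V}_S \subseteq \mathcal{V}$ with $|\mathcal{V}_S| \leq N$ so as to maximize the Fiedler value $\lambda_2(\mathcal{L}_{\mathcal{S}})$ of the induced subgraph's normalized Laplacian. Since any polynomial-time algorithm for \eqref{eq:optimization} must solve this restricted instance as a sub-case, NP-hardness of the restricted problem immediately implies NP-hardness of the original.

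Next I would carry out the reduction from a known NP-hard problem to the restricted spectral selection problem. A natural candidate is the densest-$k$-subgraph problem, whose decision version is NP-hard; an alternative is a reduction from MAX-CUT through a Laplacian gadget. The construction maps each vertex of the source instance to a vertex in $\mathcal{V}$, sets $N$ equal to the source parameter $k$, and assigns edge weights so that the Fiedler value of any induced $k$-subgraph is tightly governed by the intended combinatorial objective (edge density or cut balance), using the Cheeger-type bound in \eqref{eq:fisher}. Because $\lambda_2$ monotonically reflects expansion properties of the induced subgraph, the gadget can be tuned so that a chosen spectral threshold is crossed precisely when the source instance is a YES-instance, completing the reduction.

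The main obstacle is quantitative rather than conceptual: spectral quantities are continuous whereas the source decision is discrete, so the reduction must guarantee a polynomial separation in $\lambda_2$ between YES and NO instances that survives the looseness of Cheeger's inequality. Engineering a gadget whose Fiedler value is provably sensitive to the combinatorial choice, without collapse under spectral perturbations, is the delicate step. If this analysis becomes cumbersome, a cleaner fallback is to invoke directly the NP-hardness result of \cite{zhang2015notion} for joint node-selection and network-robustness optimization, specialized to the Fiedler-value metric, and to observe that the constraints \eqref{eq:T_constraint}-\eqref{eq:O_C_3} admit trivial instances under which only the NP-hard spectral subproblem remains — so the added constraints cannot reduce the overall difficulty.
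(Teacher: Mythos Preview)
Your restriction strategy has a genuine gap. By setting $\lambda_s=0$ you make \eqref{eq:T_constraint} vacuous, but the remaining cardinality constraint $|\mathcal{V}_S|\le N$ is only an \emph{upper} bound. With nothing forcing $\mathcal{V}_S$ to be large, and with all links permitted, the restricted problem is trivially solved: pick any two adjacent vertices. The normalized Laplacian of $K_2$ has $\lambda_2=2$, which is the maximum value $\lambda_2$ can ever attain for a normalized Laplacian, so no $k$-subgraph can beat it. Hence your ``hardest combinatorial core'' is in fact solvable in constant time, and no reduction from densest-$k$-subgraph or \textsc{Max-Cut} can succeed against it. The fallback to \cite{zhang2015notion} does not rescue this, since that result concerns $r$-robustness rather than the Fiedler value, and in any case the trivialization above shows the vacuous-constraint instance is easy regardless of which hardness result you try to import.

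The paper avoids this trap by using the constraints \emph{actively} rather than neutralizing them. It reduces from the maximum algebraic connectivity augmentation problem \cite{mosk2008maximum}: given the source graph $\mathcal{G}=(\mathcal{V},\mathcal{E})$, it sets the anomaly scores so that \eqref{eq:T_constraint} is satisfied \emph{only} when every vertex is selected (i.e., $S_a(\mathcal{V})=\lambda_s$ but $S_a(\mathcal{V}\setminus\{v_i\})<\lambda_s$ for every $v_i$), and sets path losses so that exactly the edges of $\mathcal{E}$ satisfy \eqref{eq:l_constraint}. The anomaly constraint therefore eliminates the small-subset escape hatch, and the path-loss constraint encodes the source edge set, so solving \eqref{eq:optimization} solves the NP-hard augmentation problem. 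The moral is that the hardness here comes from the \emph{interaction} of the detection constraint (which pins down the vertex set) with the spectral objective, not from the spectral subset-selection problem in isolation.
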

\begin{proof}
To prove the NP-hardness of the proposed problem, we reduce the maximum algebraic connectivity augmentation problem, a well-known NP-hard problem \cite{mosk2008maximum}, to the formulated problem.
Specifically, for a given graph $\mathcal{G}=(\mathcal{V},\mathcal{E})$ in the maximum algebraic connectivity augmentation problem, 
we set the anomaly score $S_{a}(\mathcal{V}) = \lambda_s$ and $S_{a}(\mathcal{V}-{v_i})<\lambda_s$, $\forall v_i\in\mathcal{V}$, i.e., the constraint in Eq. \eqref{eq:T_constraint} is satisfied if and only if all vertices are selected. Moreover, we assign the weight of each edge in $\mathcal{E}$ satisfying the constraint in Eq. \eqref{eq:l_constraint}, while the edges not included in $\mathcal{E}$ dissatisfy the constraint.
Via this mapping, if we have an algorithm that can effectively solve the proposed robust sensor placement problem, it can solve the maximum algebraic connectivity augmentation problem.
Therefore, the solution to the maximum algebraic connectivity augmentation problem can be derived through 
the solution of the proposed optimization problem \eqref{eq:optimization}.
Since the maximum algebraic connectivity augmentation problem is NP-hard \cite{mosk2008maximum}, the proposed optimization problem is NP-hard.
\end{proof}


Solving NP-hard problems using traditional optimization methods is extremely challenging due to the significant computational resources and unpredictable solving times required. These methods often struggle with large and complex solution spaces that are not easily navigable using conventional techniques \cite{li2017deep}, where RL excels.
Furthermore, diffusion-based RL algorithms can effectively explore the solution space through diffusion and denoising processes, allowing them to search for optimal strategies. These algorithms have demonstrated state-of-the-art performance, particularly in graph generation tasks \cite{liu2024graph}.
Therefore, we present our proposed EFGD (Fig. \ref{fig:framework}) to solve such a graph generation problem.





\subsection{Graph Diffusion}

In this subsection, we will introduce the EFGD, which uses the generative diffusion model \cite{ho2020denoising} to generate a solution graph for optimizing sensor placement.




\subsubsection{Discrete Diffusion Model}

Generative diffusion models, inspired by non-equilibrium thermodynamics, incorporate two Markov decision processes: forward diffusion and the denoising process, to generate new data \cite{cao2024survey}. 
In most of the diffusion models processing tasks, such as image generation, video generation, and audio generation, the data used in the two processes is generally continuous \cite{yang2023diffusion}. However, for graph generation problems, the data belongs to a discrete space, resulting in continuous noise that cannot be directly added. 
Therefore, in this paper, we use a discrete diffusion model to generate a sensor placement graph \cite{austin2021structured}. 

In the discrete diffusion model, the data state space $\mathcal{Z}$ of the system is discrete. For an input data $z$, which has $\vert\mathcal{Z}\vert$ potential states, i.e., $\vert\mathcal{Z}\vert$ values, we process it by one hot encoding. The one hot encoded embedding can be denoted as $\bm{z}\in \mathcal{Z}^{\vert\mathcal{Z}\vert}$.
Instead of adding random noise variables, the noises are represented by a series of transition matrices $(Q^1,Q^2,\ldots,Q^T)$, where $[Q^t]_{i,j}$ denotes 
probability of changing from state $i \in \mathcal{Z}$ to state $j \in \mathcal{Z}$. Given an input $\bm{z}^0$, the prior distribution of a sequence of increasingly noisy data points $\bm{z}^1,\bm{z}^2,\ldots,\bm{z}^T$ is
\begin{equation}
\label{eq:discrete}
\begin{split}
q(\bm{z}^{1:T}\vert \bm{z}^0)  &= \prod_{t=1}^{T} q(\bm{z}^t\vert \bm{z}^{t-1})\\
& = \prod_{t=1}^{T-1} q(\bm{z}^t\vert \bm{z}^{t-1})\cdot Q^T \\
& = \ldots\\
& = \bm{z}^0\cdot Q^1Q^2\cdots Q^T,
\end{split}
\end{equation}
where $\bm{z}^{1:T}$ represents the sequence $\bm{z}^1$, $\bm{z}^2$, $\ldots$ , $\bm{z}^T$,
$q(\bm{z}^t\vert \bm{z}^{t-1}) = \bm{z}^{t-1}\cdot Q^t$ indicates the transition from state $\bm{z}^{t-1}$ to state $\bm{z}^t$. Therefore, the distribution of noisy states $z^t$ can be calculated directly by multiplying the transition matrix $q(\bm{z}^t\vert \bm{z}^0) = \bm{z}^0\cdot \overline{Q}^t$, where $\overline{Q}^t = Q^1\cdots Q^t$. Leveraging the Bayes rule, the posterior distribution $q(\bm{z}^{t-1}\vert \bm{z}^{t}, \bm{z}^0)$ can be computed by \cite{austin2021structured}
\begin{equation}
\label{eq:posterior}
    q(\bm{z}^{t-1}\vert \bm{z}^{t}, \bm{z}^0) \propto \bm{z}^t (Q^t)' \odot \bm{z}^0~\overline{Q}^{t-1},
\end{equation}
where $\odot$ represents a pointwise product and $Q'$ is the transpose of $Q$.
Recall the goal of the original diffusion model is to transform an unknown distribution into a well-known one, such as a uniform distribution, via the forward diffusion process.
Therefore, in the discrete diffusion model, we adopt the uniform transition matrix formulated as $Q^t= \alpha^t I+(1-\alpha^t)\mathbf{1}_d \mathbf{1}'_d / d$ with $\alpha$ decreasing from $1$ to $0$. It can be proved that when 
$\lim_{t\rightarrow\infty}\alpha^t = 0$, $q(\bm{z}^t\vert \bm{z}^0)$ can coverage to a uniform distribution independently of $\bm{z}^0$ with the uniform transition $Q^t$ \cite{yang2023diffsound}.
Then, we use a neural network to learn the posterior distribution $q(\bm{z}^{t-1}\vert \bm{z}^{t}, \bm{z}^0)$, which can be utilized to recover the noisy data via the denoising process.


\subsubsection{Graph Forward Diffusion}


In our sensor placement optimization, we aim to generate an optimal sensor placement $\mathcal{S} = (\mathcal{V}_S, \mathcal{E}_S)$ represented by
a graph tuple $G = (V,E)$, 
where $V$ is vertices vector and $E$ represents the corresponding adjacency matrix.
Inspired by DiGress \cite{vignac2022digress}, we add noise incrementally for $T$ steps to both the vertices vector $V$ and the edge matrix $E$ via the transition matrix. 
Based on the discrete diffusion model introduced above, we denote $Q^t_V$ and $Q^t_E$ as the $t$-th transition matrices for vertices and edges, respectively. 
For any vertices and edges, the transition probabilities are defined as $Q^t_V = q(\bm{v}^t=j\vert \bm{v}^{t-1}=i)$ and $Q^t_V = q(\bm{e}^t=j\vert \bm{e}^{t-1}=i)$. 
According to Eq. \eqref{eq:discrete}, given an input graph tuple $G^0 = (V^0, E^0)$, the forward diffusion process can be formulated as
\begin{equation}
    q(G^{1:T}\vert G^0)  = \prod_{t=1}^{T} q(G^t\vert G^{t-1}) = (V^0\cdot \overline{Q}^{T}_V, E^0\cdot \overline{Q}^{T}_E),
\end{equation}
where $\overline{Q}^{t}_V$ and $\overline{Q}^{t}_E$ represent the product of the first $t$ transition matrices of vertices and edges, respectively, and $G^{1:T}$ is the graph sequence $G^1$, $G^2$, $\ldots$, $G^T$.

\begin{figure*}[htbp]
    \centering
    \includegraphics[width= 0.8\linewidth]{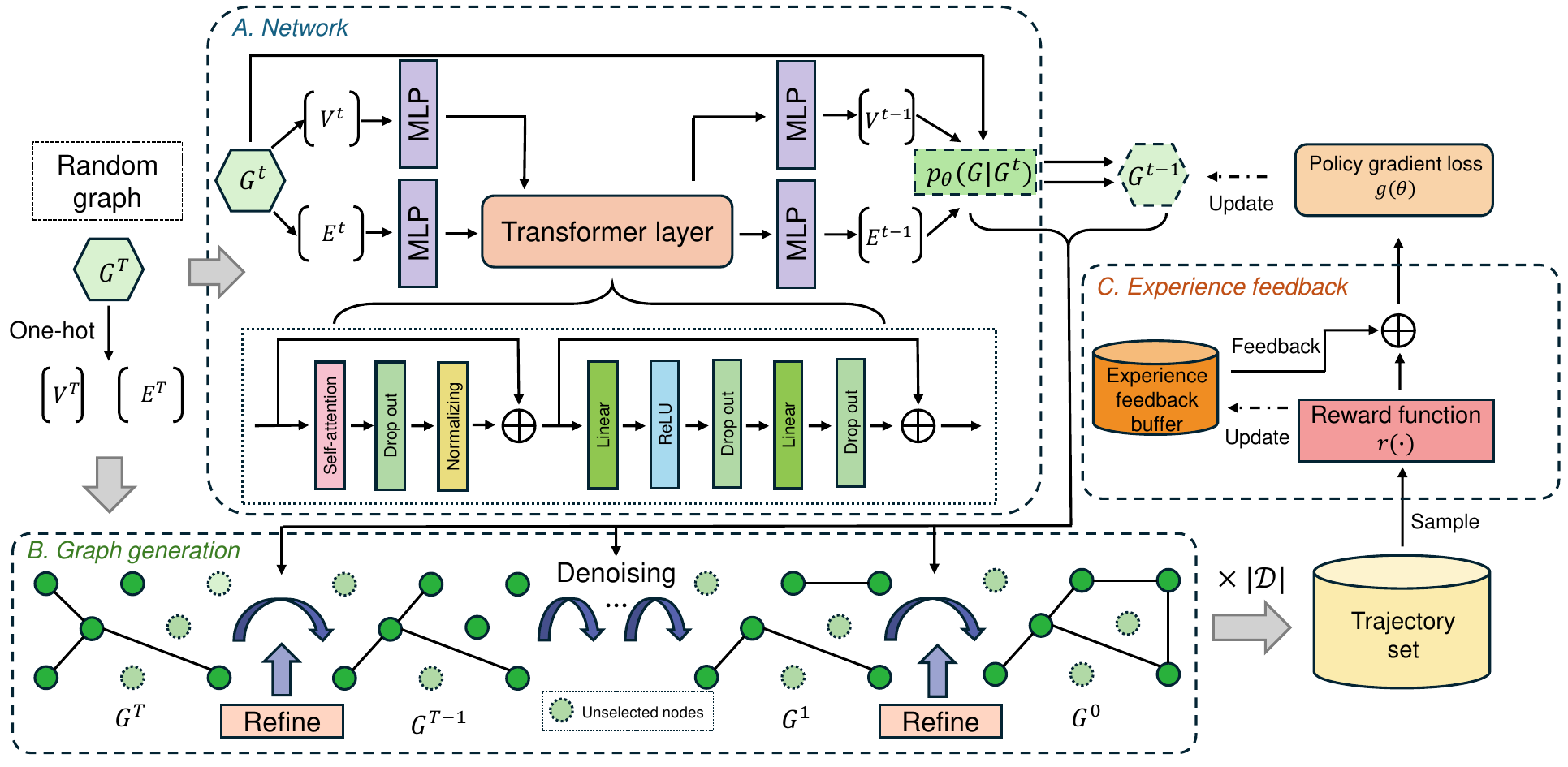}
    \caption{The illustration of EFGD framework. \textit{Part A}. The denoising network structure based on graph transformer architecture. \textit{Part B}. The graph generation process based on the denoising graph diffusion model. \textit{Part C}. The proposed experience feedback module.}
    \label{fig:framework}
\end{figure*}

\subsubsection{Graph Denoising Process}

In general, the denoising process can be regarded as the reverse process of forward diffusion, and its goal is to accurately estimate the posterior distribution for any initial input $z^0$ in Eq. \eqref{eq:posterior}. 
Consequently, the estimated posterior distribution $p_\theta$ can be expressed as \cite{vignac2022digress}
\begin{equation}
\label{eq:sum}
\begin{split}
        &p_{\theta}(G^{t-1} | G^t) \\
        =& \prod_{i=1}^{n} p_{\theta}(v_i^{t-1} | G^t) \prod_{1 \leq i, j \leq n} p_{\theta}(e_{ij}^{t-1} | G^t).
\end{split}
\end{equation}
For each term of vertices in Eq. \eqref{eq:sum}, we can compute it by marginalization according to the initial state 
\begin{equation}
\begin{split}
        p_{\theta}(v_i^{t-1} | G^t) &= \int_{v_i} p_{\theta} (v_i^{t-1}\vert v^0_i, G^t) dp_{\theta}(v_i\vert G^t) \\
        &= \sum_{v\in \{0,1\}} p_{\theta} (v_i^{t-1}\vert v^0_i = v, G^t) \widehat{p}^V_i(v \vert G^t),
\end{split}
\end{equation}
where posterior distribution $\widehat{p}^V_i(x)$ is a prediction based on $v\in \{0,1\}$ given a noisy graph $G^t$. Referring to Eq. \eqref{eq:posterior}, we let
\begin{equation}
p_\theta (v_i^{t-1}\vert v^0_i = v, G^t) = q(v^{t-1}_i \vert v^0_i = v, v^t_i).
\end{equation}
Similarly, for each edge, we have
\begin{equation}
\begin{split}
        p_\theta(e^{t-1}_{ij}\vert e^{t}_{ij}) &= \sum_{e\in\{0,1\}} p_\theta (e^{t-1}_{ij}\vert e^0_{ij} = e, G^t)\widehat{p}^E_{ij}(e \vert G^t),\\
        & = \sum_{e\in\{0,1\}} q(e^{t-1}_{ij}\vert e^0_{ij} = e, e^{t}_{ij})\widehat{p}^E_{ij}(e \vert G^t).
\end{split}
\end{equation}
In summary, each denoising step can be formulated as 
\cite{karras2022elucidating}
\begin{equation}
\label{eq:poster2}
    p_{\theta}(G^{t-1}\vert G^t) = \sum_{G\in \mathcal{G}} q(G^{t-1}\vert G^0 = G, G^t)\widehat{p}_{\theta}(G\vert G^t),
\end{equation}
where $\mathcal{G}$ denotes the set of all possible initial graphs, and
$\widehat{p}_{\theta}(G\vert G^t)$ is computed by a denoising network.

\subsection{Experience Feedback Graph Diffusion Policy Gradient}

\begin{figure*}[b]
\hrule
\begin{equation}
\label{eq:loss}
   g(\theta) = 
    \frac{1}{\vert \mathcal{D}\vert} \sum_{k=1}^{\vert \mathcal{D}\vert}\frac{T}{\vert \mathcal{T}_{k}\vert}\sum_{t\in\mathcal{T}_k}\left(  \frac{r(G^{0}_k)-\mu_r}{\sigma_r}H(p_{\theta}(G^{0:T}_k), p_{\theta}(G^{0}_k\vert G^{t})) 
    + \beta \cdot H(p(\widetilde{G}_k), p_{\theta}(G^{0}_k\vert G^{t}_k))\right).
\end{equation}
\end{figure*}

Based on the forward diffusion and denoising process for graph generation, we can effectively learn the posterior distribution of graph data.
However, for the sensor placement problem, we aim to obtain the posterior distribution of the optimal solution to the optimization \eqref{eq:optimization}. 
Inspired by GDPO \cite{liu2024graph}, we model the denoising process as a $T$-setp Markov decision process and solve it via RL. Given a Markov decision process $\mathcal{M} = (\mathcal{S},\mathcal{A},p_\theta,r,\rho_0)$, the proposed EFGD is defined as follows:
\begin{equation}
\label{eq:markve}
\centering
    \begin{split}
        &\bm{s}_t \triangleq (G^{T-t}, T-t), \bm{a}_t\triangleq G^{T-t-1},\\
        &\pi_\theta (\bm{a}_t\vert \bm{s}_t) \triangleq p_\theta(G^{T-t-1}\vert G^{T-t}),\\
        &r(\bm{s}_t,\bm{a}_t) \triangleq
            r(G^0),~\text{if}~t=T,\\
        &r(\bm{s}_t,\bm{a}_t)  \triangleq 0,~\text{otherwise},
    \end{split}
\end{equation}
where $\bm{s}_t \in \mathcal{S}$ and $\bm{a}_t\in \mathcal{A}$ indicate the state and action at the $t$-th step respectively, $\pi_\theta$ represents the policy for sensor placement, $p_\theta$ is the transition function determining the probabilities of state transitions, $r(\bm{s}_t,\bm{a}_t)$ denotes the reward for action $\bm{a}_t$ at state $\bm{s}_t$, $\rho_0$ gives the distribution of the initial noise graph state, and $G^0$ is the state after $T$-step denoising.

In the denoising process,
as an agent interacts in the Markov decision process, we can acquire a trajectory denoted as $\tau = (\bm{s}_0, \bm{a}_0,\bm{s}_1,\bm{a}_1,\ldots,\bm{s}_{T-1},\bm{a}_{T-1},\bm{s}_T)$. The accumulative reward $R(\tau)$ of each trajectory is given by $R(\tau) = \sum_{t=0}^{T-1}r(\bm{s}_t,\bm{a}_t) = r(G^0)$. To obtain an optimal sensor placement policy, the objective is to find a policy $\pi_\theta$ whose trajectory $\tau$ maximizes the expected accumulative reward $R(\tau)$, i.e.,
\begin{equation}
\label{eq:ACC reward}
    \mathcal{J}(\theta) = \mathbb{E}_{\tau\in p(\tau\vert \pi_\theta)}[R(\tau)] = \mathbb{E}_{G^0\in p_\theta(G^{0:T})}[r(G^0)].
\end{equation}
We can notice that the expected accumulative reward $R(\tau)$ in Eq. \eqref{eq:ACC reward} is equivalent to the expected reward of the final generated result. 
As a result,
we can obtain the optimal sensor placement of the proposed optimization problem \eqref{eq:optimization} via the denoising process.
To effectively generate a policy trajectory, the EFGD can utilize the framework of policy gradient methods such as the REINFORCE algorithm \cite{sutton2018reinforcement} and the proximal policy optimization (PPO) algorithm \cite{schulman2017proximal}. 
Following the policy gradient, EFGD is trained to learn the optimal policy for the placement of wireless sensors.
Given the objective $\mathcal{J}(\theta)$ , the policy gradient $\nabla_{\theta}\mathcal{J}(\theta)$ can be express as 
\begin{equation}
\label{eq:policygradinet}
\begin{split}
    \nabla_{\theta}\mathcal{J}(\theta) &= \mathbb{E}_{\tau}\left[ r(\bm{s},\bm{a}) \nabla_{\theta}\log \pi_{\theta}(\bm{a}\vert \bm{s}) \right] \\ 
    & = \mathbb{E}_{\tau} \left[r(G^{0}) \sum_{t=0}^{T-1} \nabla_{\theta}\log p_{\theta}(G^{t-1}\vert G^{t}) \right].
\end{split}
\end{equation}

However, when training based on the policy gradient in Eq. ~\eqref{eq:policygradinet}, the generated graph cannot converge to a high reward area due to the tremendous number of graph trajectories, particularly as the number of nodes increasing \cite{black2023training}. 
To solve this, we adopt the eager policy gradient in \cite{liu2024graph}, which can be written as
\begin{equation}
\label{eq:eager}
    \nabla_{\theta}\mathcal{J}(\theta) = \mathbb{E}_{\tau} \left[r(G^{0}) \sum_{t=0}^{T-1} \nabla_{\theta}\log p_{\theta}(G^{0}\vert G^{t}) \right].
\end{equation}
In the eager policy gradient, the graph trajectories are partitioned into different equivalence classes based on the possible prior graph $G^0$, where trajectories with the same $G^0$ are considered equivalent, thereby reducing the number of graph trajectories.
Even though optimizing over these equivalence classes will be much easier than the original policy gradient to explore higher reward results, it still needs a certain convergence speed due to the slow convergence nature of diffusion \cite{liu2024graph}.

To increase convergence speed, we improve the training process by modifying the policy gradient and the experience enhancement to eliminate fluctuating and unreliable policy gradient estimates.

\subsubsection{Cross-Entropy Gradient}

Inspired by the cross entropy loss, which is a measure of the difference between two probability distributions \cite{mao2023cross}, we introduce a cross-entropy gradient, which can be expressed as
\begin{equation}
\label{eq:ceg}
    \nabla_{\theta}\mathcal{J}(\theta) = -\mathbb{E}_{\tau} \left[r(G^{0}) \sum_{t=0}^{T-1} \nabla_{\theta}\big(H(p_{\theta}(G^{0:T}), p_{\theta}(G^{0}\vert G^{t}) ) \big)\right],
\end{equation}
where $H(p_{\theta}(G^{0:T}), p_{\theta}(G^{0}\vert G^{t}))$ is the cross entropy between the distribution of generated graph $p_{\theta}(G^{0:T})$ and the predicted distribution $p_{\theta}(G^{0}\vert G^{t})$. Maximizing the cross-entropy gradient is equivalent to minimizing a reward-weighted cross-entropy function, thereby exploring higher rewards and making the predicted distribution closer to the final distribution.







\subsubsection{Trajectory Experience Feedback}

Based on the RLHF, which selects the highest reward strategy as the experience feedback, we propose a trajectory experience feedback to accelerate training convergence. Different from the experience replay, the experience feedback is a generated graph with the highest reward rather than a series of action policies. Letting $p(\widetilde{G})$ as the distribution of the experience graph, the loss function can be written as
\begin{equation}
\label{eq:cem}
    l_{CEM} = \mathbb{E}_{\tau}\left[ \mathbb{E}_{t}[H(p(\widetilde{G}), p_{\theta}(G^{0}\vert G^{t}))]  \right].
\end{equation}
To minimize the loss function $l_{CEM}$, the network can use optimization solutions with higher rewards, thereby accelerating the training and convergence of the network. In practice, we select the experience graph $\widetilde{G}$ from an experience feedback buffer $\mathcal{B}$ consisting of $\vert \mathcal{B}\vert$ trajectories with the highest reward by comparing the mean square error with the current prediction $G^0$.

\new{
It is worth noting that while this mechanism shares the conceptual foundation of utilizing a replay buffer with prioritized experience replay (PER)~\cite{schaul2015prioritized}, there are fundamental distinctions in their selection criteria and objectives. 
PER prioritizes individual transitions with high temporal-difference errors to replay rare or hard-to-learn experiences. 
In contrast, our trajectory experience feedback operates at the trajectory level, prioritizing complete graph structures that yield the highest rewards. 
Instead of correcting prediction errors, these high-reward trajectories serve as explicit ``expert demonstrations'' or targets for the cross-entropy gradient, directly guiding the diffusion denoising process toward optimal solution regions.
Consequently, unlike standard PER, which is typically restricted to off-policy settings due to distribution shift issues, our proposed mechanism is compatible with on-policy algorithms such as PPO~\cite{schulman2017proximal}, as it leverages historical data as guidance targets rather than direct replay samples for policy updates.
}

Combining the proposed two training strategies together, we approximate Eqs. \eqref{eq:ceg} and \eqref{eq:cem} with a Monte Carlo estimation to obtain the loss function. Moreover, we adopt reward standardization in the
training stage to ensure the stability and convergence of network training.
The
standardized loss function can be written as Eq. \eqref{eq:loss}, where $\mathcal{D}$ and $\mathcal{T}_k$ represent the sets of sampled trajectories and timesteps, respectively, $\mu_r$ and $\sigma_r$ represent the mean value and variance of the reward for each action on the collected trajectory, respectively, and $\beta$ is an adjustable weight parameter. 

In summary, the proposed EFGD method aims to train a neural network $p_{\theta}$ to learn the posterior distribution in Eq. \eqref{eq:sum}. Via formulating the denoising process as a $T$-step Markov decision process $\mathcal{M}$, we use the policy gradient framework to 
train the network $p_\theta$ with the modified loss function as Eq. \eqref{eq:loss} to ensure the stability
and convergence of network training. Then, the parameter $\theta$ of the network $p_\theta$ is updated by stochastic gradient descent.
Finally, we sample the Markov decision process trajectories using the trained network $p_\theta$ via denoising process in Eq. \eqref{eq:poster2}.


\subsection{EFGD for Wireless Sensor Placement}
\label{sec:reward}

In this subsection, we show the data flow processing in different procedures and present our EFGD framework for wireless sensor placement. Moreover,
we outline the deployment details of EFGD for optimization, including reward function setting, network structure, and optimization process.

The entire EFGD process is detailed in Algorithm \ref{alg:DRGO} and illustrated in Fig. \ref{fig:framework}. 
We aim to train a denoising network capable of estimating the posterior distribution. Each training epoch begins with the random generation of initially noised graphs. Subsequently, $\mathcal{D}$ trajectories are produced to form the trajectory set $\mathcal{D}$, originating from the initial graphs as depicted in Figure \ref{fig:framework} \textit{Part B}. During the denoising process, a refinement operation is applied to validate the generated edge matrix, ensuring its symmetry and the connectivity of only selected points. 
Graph trajectories are then sampled, and their rewards are calculated based on the reward function $r(\cdot)$. Particularly, during the reward calculation process, the refinement operation eliminates the edges that do not meet the constraint in Eq. \eqref{eq:l_constraint} to validate the generated sensor placement strategy.
Moreover, an experience feedback buffer is maintained, which stores the higher reward trajectories. We select the most suitable trajectory experience feedback based on the minimum mean square error (MMSE) between the two graphs.
This feedback, along with the trajectories, is then used to compute the policy gradient loss in Eq. \eqref{eq:loss} to train the denoising network.


\subsubsection{Reward Function Setting}

Recall our goal is to utilize EFGD to solve the optimization problem \eqref{eq:optimization} and obtain the optimal sensor emplacement policy. Achieving this objective requires an appropriate reward function $r$ for graph generation in the Markov decision process $\mathcal{M}$. Considering both the optimized objective and constraints in the optimization problem, we define a penalty-constrained reward function whose penalty terms are the constraints.
Specifically, the penalty-constrained reward function $r$ can be expressed as
\begin{equation}
\label{eq:rewardfunction}
\begin{split}
    &r(G^0) = \\
    &~~~~\begin{cases}
    r_1 \cdot\lambda_2(\mathcal{L}_{G^0}),~\text{if~} I_1\times I_2 = 1, \\
    -r_2 \cdot (|\mathcal{V}_{G^0}| - N) - r_3 \cdot (\lambda_s - S_{a}(\mathcal{V}_{G^0})),~\text{otherwise},
\end{cases}
\end{split}
\end{equation}
where $I_1 = \textbf{1}(|\mathcal{V}_{G^0}|\leq N)$ and $I_2=\textbf{1}(S_{a}\leq\lambda_s)$; $\textbf{1}$ denote the indicator; $r_1$, $r_2$, and $r_3$ are adjustable weight ratio. \new{Based on this formulation, an agent in the Markov decision process $\mathcal{M}$ can obtain a positive reward only if the sensor placement policy generated by the chosen action strictly satisfies the constraints. It is worth noting that $r_1$ is set to a relatively large value to act as a scaling factor. Since the Fiedler value $\lambda_2$ is typically small (ranging from 0 to 1), this scaling ensures that the gradient magnitude for objective maximization is sufficient for effective learning once the agent enters the feasible region.
Additionally, due to the different magnitudes of varying conditions, we carefully tune $r_2$ and $r_3$ to balance the penalty gradients, ensuring the agent effectively learns to satisfy all constraints.}


Consequently, for a Markov decision process $\mathcal{M}$ and the penalty-constrained reward function $r$, the denoising result following the trajectory of the maximum cumulative reward will correspond to the optimal solution of the optimization problem \eqref{eq:optimization}, i.e., the optimal robust sensor placement policy.

\subsubsection{Network Structure}

We construct the denoising network for EFGD via the graph transformer architecture based on \cite{liu2024graph} as shown in Fig. \ref{fig:framework} \textit{Part A}.
Firstly, one-hot embeddings of nodes and edges are extracted from the graph $G^t$, respectively, to convert state information into probability space to facilitate prediction of node and edge selection. These embeddings are then mapped to latent space through the Multi-layer Perception (MLP) layer, which consists of a combination of two consecutive linear layers and Rectified Linear Unit (ReLU) activation functions. Subsequently, the processed node and edge data are input into a transformer structure together, using a self-attention mechanism based on a Feature-wise Linear Modulation layer \cite{liu2024graph} to focus on each detail feature. Simultaneously, the dropout layer is utilized to prevent the network from being overfitted. Residual connections and normalizing layers are combined to facilitate network learning, enhance stability, and prevent gradient explosion and disappearance.
Finally, the MLP layer is used to restore the original probability space $p_{\theta}(G|G^t)$ to predict the current selection of the edges and nodes $G^{t-1}$.

\begin{algorithm}[tpb]
    {\small \caption{\textcolor{black}{The proposed EFGD algorithm}} \label{alg:DRGO}
    $\mathbf{Input}$: Initial denoising network $p_\theta$, a Markov decision process $\mathcal{M} = (\mathcal{S}, \mathcal{A}, p_\theta, r, \rho_0)$, \# of denoising steps $T$, \# of trajectory sample $\vert\mathcal{D}\vert$, \# of timestep samples $\vert \mathcal{T}\vert$, learning rate $\gamma$, \# of training steps $N_t$, \# of the size of experience feedback buffer $|\mathcal{B}|$;\\
    \vspace{0.5em}
    \textit{Procedure 1: EFGD Training};\\
    \quad  \For{$i = 1, 2, \dots, N_t$}{    
    \quad    \For{$d = 1, 2, \dots, |\mathcal{D}|$}{
    \quad     Sample initial noisy graph from the Markov decision process $\mathcal{M}$\\
    \quad \For{$t = 1, 2, \dots, T$}{
    \quad                Perform denoising based on Eq. \eqref{eq:poster2}
    } 
    \quad Acquire trajectory $\boldsymbol{\tau}_d$ \\
    \quad Sample timestep $T_d$ $\sim$ $\mathrm{Uniform}([1, T])$\\
    \quad Calculate reward $r(G^0)$ based on Eq. \eqref{eq:rewardfunction}
    \quad Update experience feedback buffer $\mathcal{B}$ based on the reward $r(G^0)$
    }
    }
    \quad Calculate policy gradient loss $g(\theta)$ based on Eq. \eqref{eq:loss}\\
    \quad Update $p_\theta$ parameters by gradient descent\\
    \vspace{0.5em}
    \textit{Procedure 2: EFGD Inference};\\
    \quad \For{$t = 1, 2, \dots, T$}{
    \quad Perform denoising process based on Eq. \eqref{eq:poster2}\\
    }
    $\mathbf{Output}$: Generated sensor placement $G^0$
}\end{algorithm}

\section{Numerical Results}
\label{sec:exp}

In this section, we implement the proposed sensor placement problem in a real power system, IEEE 118-Bus System \cite{pypower}. Then, we conduct extensive experiments to evaluate the proposed EFGD algorithm,
covering two aspects: algorithm performance and the impact of hyperparameters on performance.

\subsection{Experiment Setting}
\label{sec:exp_set}

The experiments are conducted on a Linux server with an NVIDIA A100 GPU with 80 GB of memory. 
All codes are written in Python, network training is based on the PyTorch package, and power grid simulation uses the pypower package \cite{pypower}.
Due to the sensor placement problem, we cannot get real-time monitoring data from sensors. Therefore, we generate a dataset of anomalies and normal scenarios in the IEEE 118-Bus System through the pypower simulator with the same power load setting in \cite{hooi2019gridwatch}. 
By considering resistance information as distance information, we utilize Multidimensional Scaling \cite{kruskal1978multidimensional} to simulate the 2D coordinates of the nodes in the power grid. Then, we can calculate the distance between any two nodes via 2D coordinates.
For the LNSPL model, we utilize the parameters of a smart grid substation model with IEEE 802.15.4 \cite{sandoval2017improving}.
Specifically, the reference distance $d_0$ is $1$m, the reference path loss is set at $40.3308$ dB, the path loss exponent $\gamma$ is 1.701, and the variance $\sigma$ of the Gaussian noise $X_{\sigma}$ is set at $2.18$ dB.

\new{
Furthermore, to determine the reliability threshold $\lambda_c$ for practical applications, we perform a link budget analysis based on hardware specifications. 
We set the transmission power to 10 dBm and the noise power to -90 dBm. To ensure a lower transmission error, the SNR must exceed a minimum requirement of 25 dB \cite{gungor2010opportunities}. 
Based on Eq. \eqref{eq:snr}, we calculate that links with a path loss greater than 75 dB are unstable. 
This calculation provides a guideline for setting the threshold $\lambda_c$ in Eq. \eqref{eq:l_constraint}, which is set to 75 dB in our experiment.
For anomaly detection, the parameter settings are guided by deployment budgets and safety standards. 
We assume a budget cap of 25 sensors, i.e., $N=25$ in Eq. \eqref{eq:O_C_3}. 
The threshold $\lambda_a$ for the anomaly detection score $S_{a}$ is set at 50~\cite{hooi2019gridwatch}, which is determined by the historical statistical characteristics of the grid.
Finally, the safety threshold $\lambda_s$ reflects the operational requirement for detection performance. 
Targeting a detection accuracy of 90\% to satisfy the grid safety standard, we set $\lambda_s=0.90$ in Eq.~\eqref{eq:T_constraint}.
}

For training, we set all methods' batch size and learning rate to $256$ and $1\times10^{-5}$, respectively. Then, we train each learning method for $90$ epochs, with the reward function and network structure discussed in Subsection \ref{sec:reward}. The weight ratios are set as $r_1= 5000$, $r_2=1.075$, and $r_3=0.5$.
\new{
Crucially, during training, the path loss is randomly generated using the LNSPL model at each iteration. This Monte Carlo-based simulation approach ensures that the learned policy captures the channel's stochastic nature and is robust to shadowing variations~\cite{harrison2010introduction}.
In the inference stage, to comprehensively assess performance, we generate $50$ distinct sensor placement strategies using the trained model. We then evaluate the effectiveness of these strategies by calculating the average reward and variance under $100$ independent test conditions with random shadowing realizations.
}


\subsection{Performance Comparison}

\subsubsection{Baseline Methods}

We benchmark the proposed EFGD algorithm with the following baselines:
\textbf{Greedy-Accuracy}, where the greedy strategy based on GridWatch is adopted to ensure the accuracy of anomaly detection \cite{hooi2019gridwatch};
\textbf{Greedy-Robustness}, where the greedy algorithm identifies the optimal location and link connection for the new node by iteration, aiming to maximize the Fiedler value of sensor placement;
\textbf{Random}, 
where the sensor placement nodes and network connections are randomly selected;
\textbf{GDPO}, which follows the graph diffusion optimization framework in \cite{liu2024graph}; 
\textbf{DDPO}, which is a deep RL algorithm based on standard negative log policy gradient \cite{black2023training}. 
\new{
Note that the RL baselines, including GDPO and DDPO, are set to the same setting as the proposed method EFGD and are trained using the identical reward function in Eq.~\eqref{eq:rewardfunction} to ensure a strictly fair comparison.
Additionally, for traditional methods (Greedy and Random), we also utilize Eq.~\eqref{eq:rewardfunction} as a unified evaluation metric to calculate their ``AvgReward.'' 
This standardized quantification allows for a direct assessment of how well each method satisfies the constraints and maximizes the robustness objective under the same criteria.
}
Additionally, for the proposed method EFGD, the diffusion step is 20, the buffer length is 50, and the weight parameter $\beta=0.2$.

\subsubsection{Sensor Placement Optimization}

Fig. \ref{fig:ex1} illustrates the average reward (denoted as AvgReward) achieved by the EFGD and compared to the baseline learning methods DDPO and GDPO, where EFGD, DDPO, and GDPO reward curves are represented in red, orange, and yellow, respectively.
Initially, when the condition $I_1 \times I_2 = 1$ in the reward function (Eq. \eqref{eq:rewardfunction}) is mostly unsatisfied, indicating the phase where the models are learning the constraints, the performances of all three methods are comparable, with GDPO slightly outperforming the others. As the training progresses beyond the constraint learning phase and into the objective function learning phase, where the models start generating graphs that satisfy the constraints around 40 epochs, there is a significant reward improvement in all models compared to their initial performances with the increment in positive samples.
Moreover, the average reward for the EFGD strategy shows a remarkable increase compared to the other two benchmarks.
Notably, EFGD converges after 60 epochs, while DDPO and GDPO take approximately 74 epochs.
This observation underscores that additional experience feedback can effectively enhance the model's convergence speed by 18.9\%.
Comparing the average AvgReward since the convergence epoch, EFGD reaches an average AvgReward of 2.2007, while DDPO and GDPO converge with an average reward of -9.9470 and -8.4694, respectively.
With -63 as the baseline average starting reward, the proposed EFGD algorithm can enhance the average reward by 22.90\% compared to DDPO, and by 19.57\% compared to GDPO.

\begin{figure}[htbp]
    \centering
    \includegraphics[width= 0.8\linewidth]{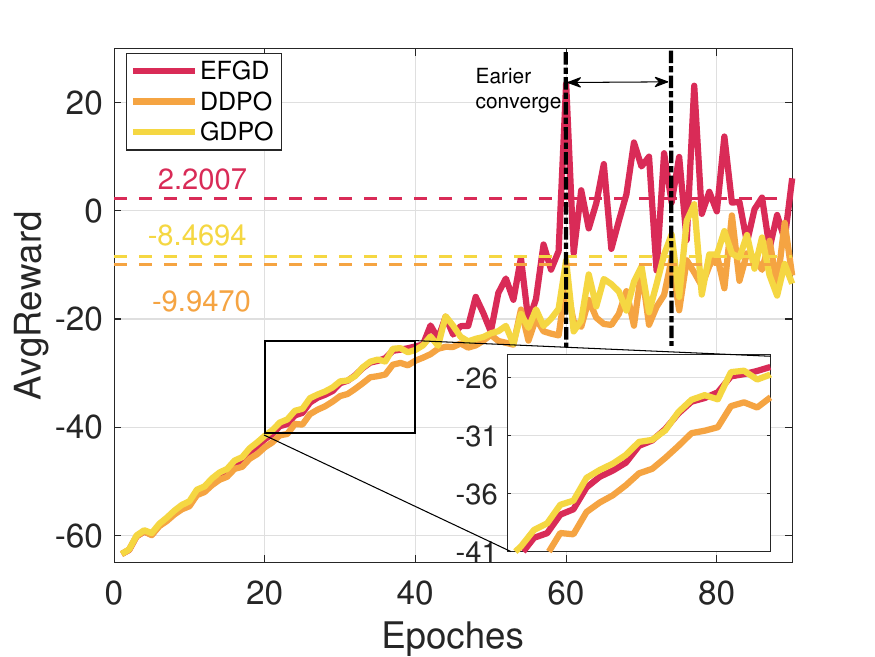}
    \caption{
     The average reward (AvgReward) of EFGD and other baselines over training epochs.}
    \label{fig:ex1}
\end{figure}

\begin{figure}[htbp]
    \centering
    \includegraphics[width= 0.8\linewidth]{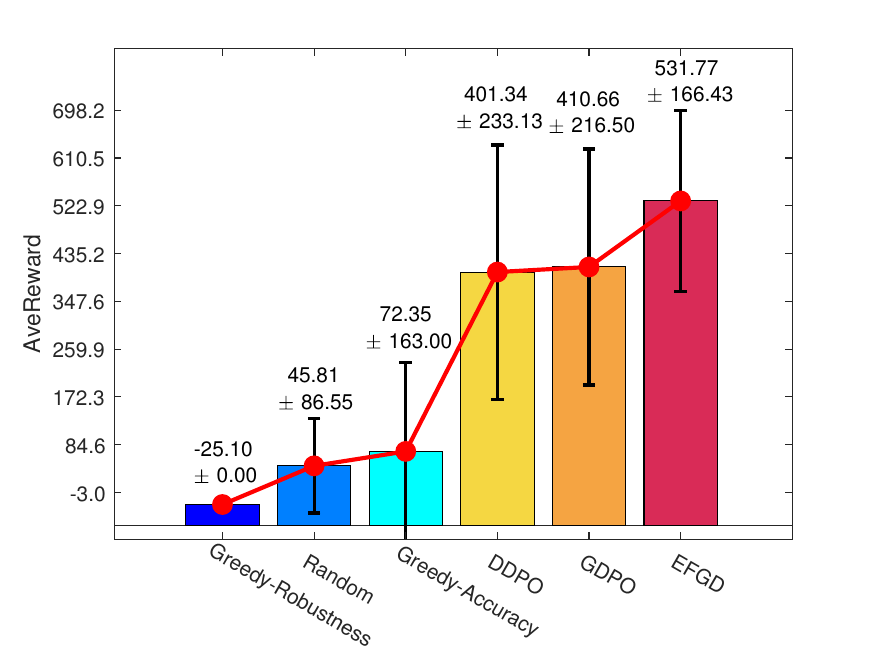}
    \caption{
     The average reward (AvgReward) of EFGD and other baselines in the inference stage.}
    \label{fig:ex12}
\end{figure}

Further analysis is conducted using these aforementioned algorithms to generate 50 sensor placement strategies, evaluating the average reward and variance of these strategies under 100 test conditions. 
The results for the optimal placement strategy for each method are plotted in Fig. \ref{fig:ex12}. The traditional baseline methods show a stark contrast in performance: Greedy-Robustness scores the lowest at $-25.10 \pm 0.00$, Greedy-Accuracy has a performance of $72.35 \pm 163.00$, and Random strategy yields $45.81 \pm 86.55$. 
In comparison, the RL-based methods significantly outperformed these, with DDPO achieving $401.34 \pm 233.13$, GDPO $410.66 \pm 216.50$, and EFGD excelling at $531.77 \pm 166.43$.
The results highlight that the proposed EFGD surpasses the learning baselines GDPO and DDPO by 25.57\% and 28.09\%, respectively, and outperforms traditional methods by over 400\%. 
The primary reason why the performance of the two greedy algorithms is similar to that of the random algorithm yet significantly lower than that of the learning-based method is that each greedy algorithm focuses solely on one aspect of the system, such as accuracy or robustness. This narrow focus makes it challenging to meet the overall constraints of the optimization or to fully maximize the optimization goal, thereby demonstrating that traditional methods are inadequate for such complex NP-hard challenges. 
Furthermore, the substantial improvement over other learning underscores the efficacy of incorporating high-reward strategies as experience feedback into the model, which enables a sharper focus on such strategies and facilitates convergence to more optimal solutions. Moreover, the reduced standard deviation of the reward from the generated placement policy indicates that the sensor placement produced by EFGD exhibits more robust performance across varying conditions.




Fig. \ref{fig:inference} illustrates the graph generation process of the trained EFGD model under varying communication conditions. The displayed results confirm that EFGD adeptly adjusts its node and edge generation processes during the denoising process. This capability enables the generation of graphs that effectively meet the requirements imposed by diverse scenarios, thereby demonstrating the model's extensive generalizability.
Moreover, it is noteworthy that in the generated graphs, nodes are interconnected by multiple edges. This structural characteristic significantly enhances the robustness of the network, ensuring that the graph remains connected even in the event of edge failures. Such robustness is critical in maintaining the integrity and operational stability of the system under adverse conditions.

\new{
\subsubsection{Computational Efficiency and Practical Applicability}

Beyond the theoretical convergence property, we further evaluate the computational efficiency in terms of wall-clock time to assess real-world applicability. 
For the training phase, the average computational time is approximately 60 seconds per epoch. 
Since the proposed EFGD algorithm typically converges within 60 epochs, as shown in Fig.~\ref{fig:ex1}, the entire model training process can be completed in approximately one hour. 
Considering that sensor placement is typically a long-term planning task performed offline, this training overhead is highly acceptable for practical engineering deployments.
For the inference phase, generating a single robust sensor placement strategy takes only about 1 second. 
This ultra-low latency enables system operators to perform rapid ``what-if'' analyses or dynamically regenerate strategies in response to significant topology changes, demonstrating the strong potential of EFGD for efficient practical applications in CPPS.
}

\begin{figure*}[htbp]
    \centering
    \includegraphics[width= 0.95\linewidth]{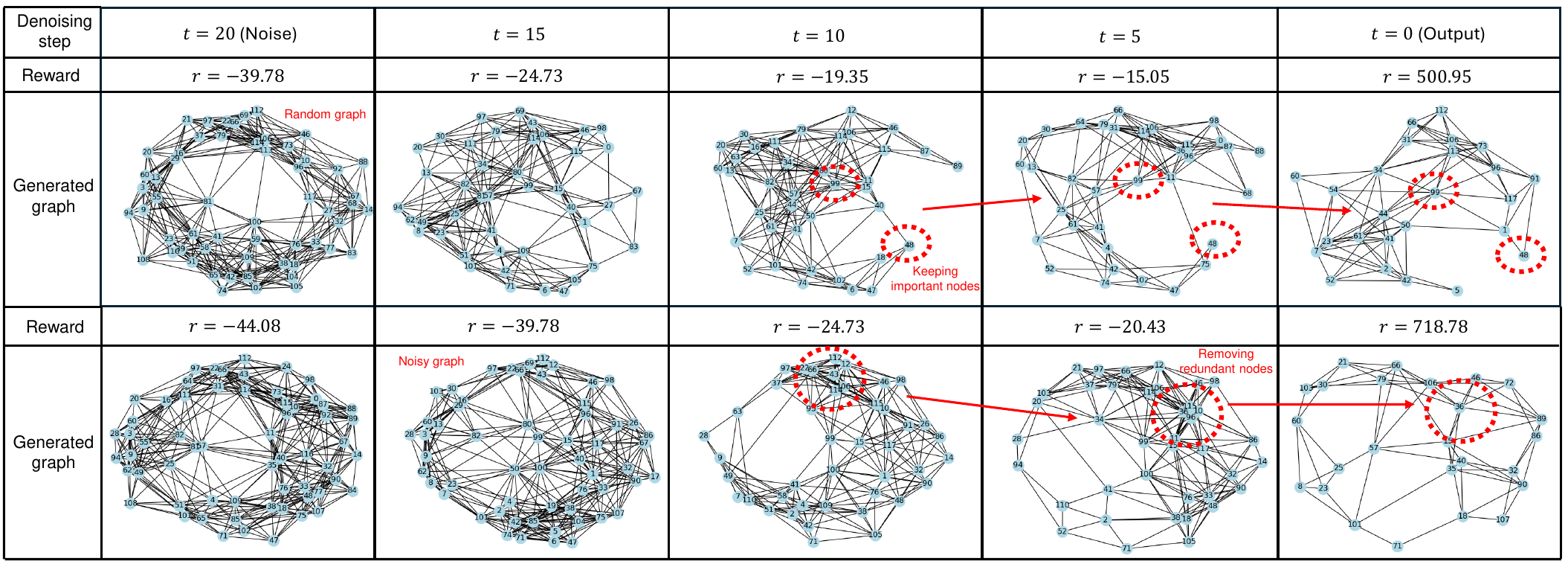}
    \caption{The illustration of the graph generation process in IEEE-118 bus system under two communication conditions.}
    \label{fig:inference}
\end{figure*}

\subsection{Hyperparameters Analysis}

Next, we analyze the impact of model hyperparameters on the performance of the proposed model from three aspects, including 1). the number of diffusion steps $T$; 2). the value of weight parameter $\beta$; and 3). the size of experience feedback buffer $\vert\mathcal{B}\vert$.

\subsubsection{Diffusion Step}

First, our analysis focuses on the impact of the diffusion step size on the reward achieved by the model's generation strategy. The diffusion step size generally influences the generation quality of diffusion models. A longer step size typically pushes the noise-added distribution closer to the standard normal distribution, facilitating the learning of an accurate posterior distribution.
As depicted in Fig. \ref{fig:ex2}, the diffusion step labeled as ``EFGD-T" varies, where ``T" represents the number of diffusion steps. With the exception of ``EFGD-5", which shows a slower convergence, the trends of the other three curves (``EFGD-10", ``EFGD-20", ``EFGD-30") are remarkably similar, all converging within 60 epochs. 
Furthermore, 
The average AvgRewards post-convergence are as follows: EFGD-5: -6.9043; EFGD-10: 2.0901; EFGD-20: 2.2007; EFGD-30: -0.4780.
These results indicate that there is generally an upward trend in the reward as the noise addition step length increases. However, performance declines when the diffusion step length becomes excessively large. This decline can be attributed to the fact that the final denoising result, $G^0$, is used as the trajectory's reward and the basis for calculating the policy gradient. While this approach emphasizes the overall scenario, it overlooks the finer details, particularly the reward attributed to each step. As the diffusion step length increases, the exploration capabilities of the model are somewhat diminished, thereby adversely affecting performance. In summary, we need to select an appropriate de-noising step size to balance details and global features.





\begin{figure}[htbp]
\centering
\begin{subfigure}{.32\textwidth}
  \centering
  \includegraphics[width=1.0\linewidth]{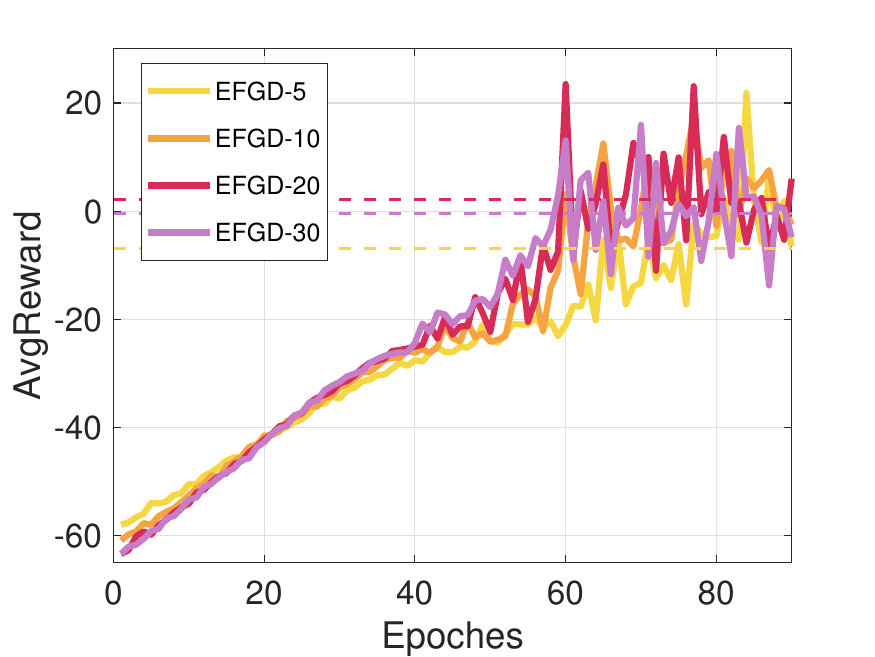} 
\end{subfigure}%
\begin{subfigure}{.16\textwidth}
  \centering
  \includegraphics[width=1.0\linewidth]{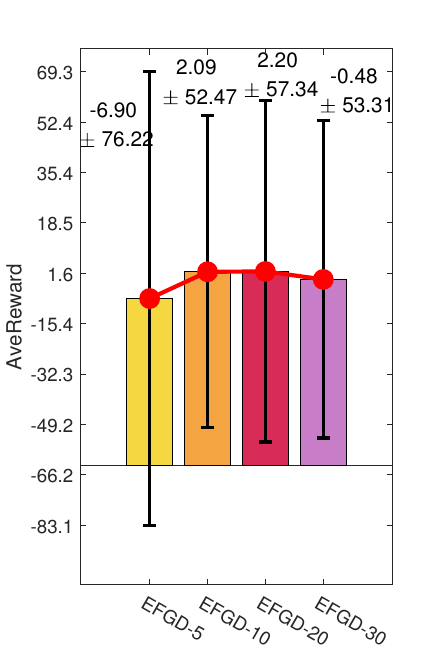}
\end{subfigure}
\caption{The average reward (AvgReward) of EFGD under different diffusion steps.}
\label{fig:ex2}
\end{figure}

\subsubsection{Weight Parameter}



We then investigate the effect of the weight parameter $\beta$ on model performance, which modulates the influence of experience feedback on the current policy gradient. As this parameter increases, so does the impact of the feedback on the gradient calculations.
Fig. \ref{fig:ex3} presents the convergence behavior of the model at various weights. At a 
$\beta$ value of 0.1, the model ``EFGD-0.1" converges at 70 epochs, demonstrating a faster convergence than the original 74 epochs observed for GDPO and DDPO. As the weight parameter increases further, the convergence consistently occurs around 60 epochs.
Upon examining the average AvgReward after convergence, a trend emerges where the average reward decreases as the weight parameter increases. Specifically, The average rewards for EFGD with parameters -0.1, -0.2, -0.3, and -0.4 are as follows: 3.9522, 2.2007, 2.4789, and 1.8746, respectively.
The performance of ``EFGD-0.4" is 3\% lower than that of ``EFGD-0.1". This decrement in performance suggests that with higher feedback weights, the model increasingly focuses on previously explored strategies. This shift in focus can reduce the model's capability to explore new strategies, potentially leading to premature convergence to local optima and, thus, a decline in overall performance.


\begin{figure}[htbp]
\centering
\begin{subfigure}{.32\textwidth}
  \centering
  \includegraphics[width=1.0\linewidth]{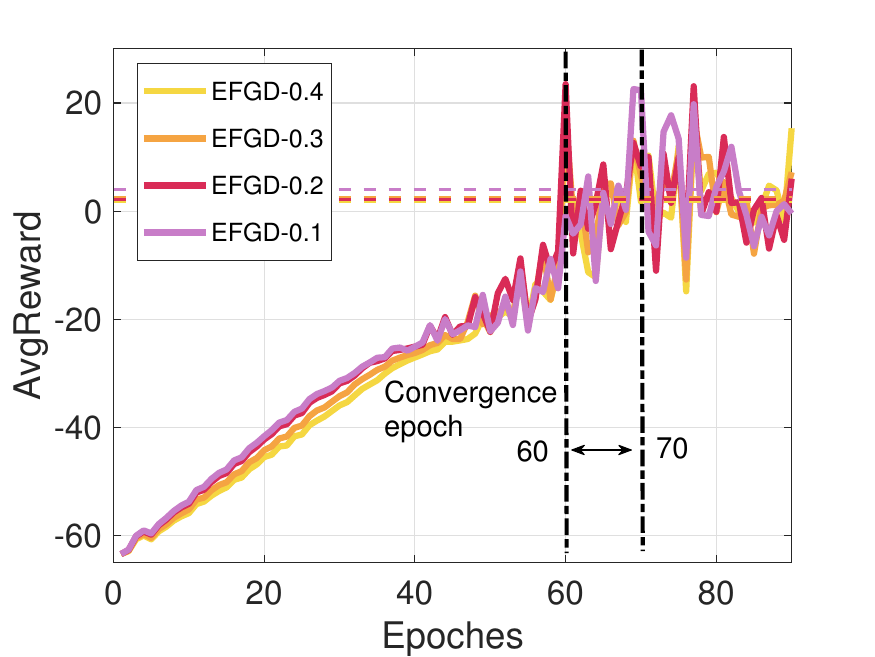} 
\end{subfigure}%
\begin{subfigure}{.16\textwidth}
  \centering
  \includegraphics[width=1.1\linewidth]{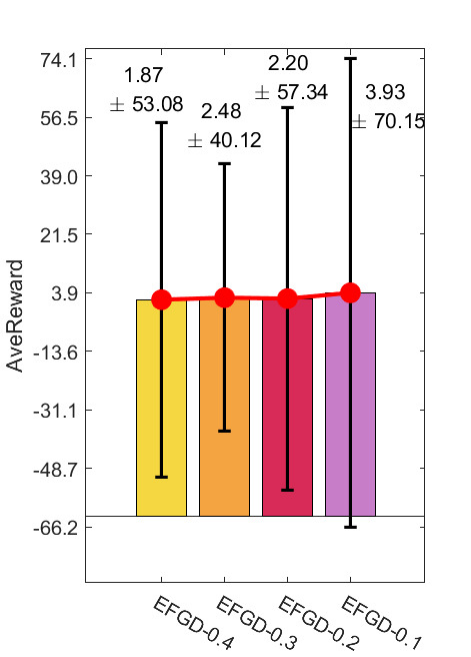}
\end{subfigure}
\caption{The average reward (AvgReward) of EFGD under different weight parameters.}
\label{fig:ex3}
\end{figure}

\subsubsection{Size of Experience Feedback Buffer}



Lastly, we evaluate the role of the experience feedback buffer size in shaping model performance. An increase in buffer size allows the storage of a larger number of previously successful strategies, which the model utilizes as references when calculating the policy gradient using the MMSE as a similarity criterion.
Fig. \ref{fig:ex4} illustrates that regardless of buffer size, all variations of the model converge around 60 epochs. Furthermore, the average AvgReward after convergence demonstrates that within a certain range, an increase in buffer size enhances model performance. Specifically, with buffer sizes of 10, 20, and 50, the AvgRewards are 1.7152, 1.8329, and 2.2007, respectively. Conversely, an excessively large buffer size, such as 1000, results in a significant reduction in performance, with an average AvgReward of 0.1120.
This decrease in performance with larger buffer sizes can be attributed to the buffer's capacity to accumulate many local optimal strategies that are not effectively replaced through trajectory sampling. As a result, many trajectories may rely excessively on these suboptimal strategies as feedback, thereby predisposing the model towards local optima. Thus, while an increased buffer size can facilitate faster convergence and initially improve performance by providing richer experiential feedback, there is a critical balance to be maintained. It is vital to optimize the buffer size to ensure a diversity of optimal solutions and mitigate the dominance of local optima, thus maintaining the efficacy and generalizability of the model.


\begin{figure}[htbp]
\centering
\begin{subfigure}{.32\textwidth}
  \centering
  \includegraphics[width=1.0\linewidth]{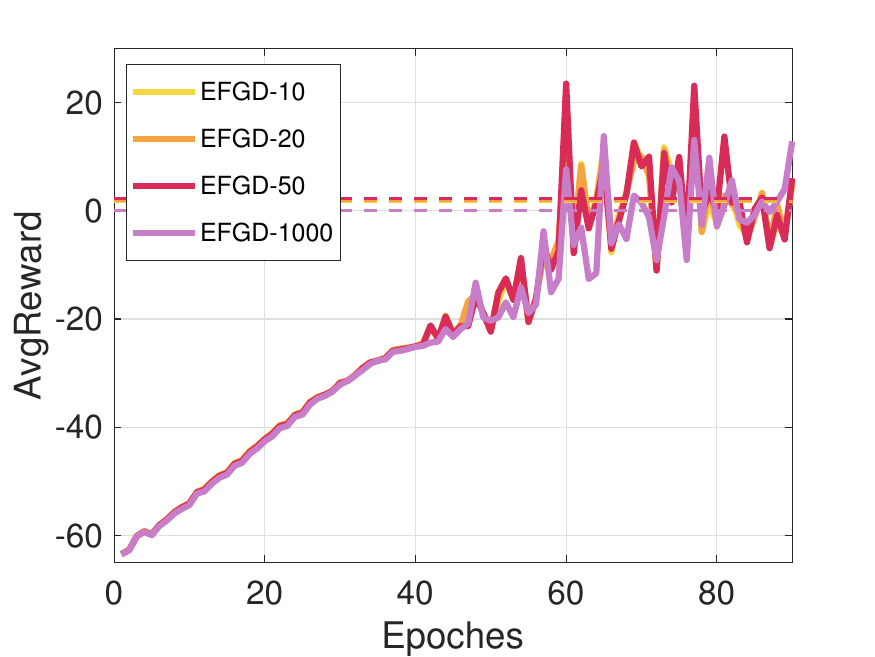} 
\end{subfigure}%
\begin{subfigure}{.16\textwidth}
  \centering
  \includegraphics[width=1.0\linewidth]{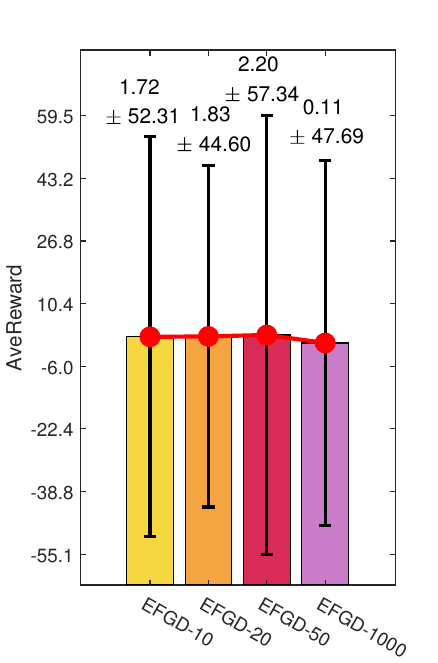}
\end{subfigure}
\caption{The average reward (AvgReward) of EFGD under different sizes of experience feedback buffer.}
\label{fig:ex4}
\end{figure}
\section{Conclusion}
\label{sec:con}





In this work, we have presented a novel approach for optimizing robust sensor placement within CPPS. Our approach involved modeling the sensor placement challenge as a graph-based optimization problem, utilizing the LNSPL model to ensure reliable data transmission, the Fiedler value to assess graph robustness against line failures, and employing three anomaly detectors to enhance system safety. We first proved the proposed optimization problem is NP-hard.
To address this complex optimization, we have proposed the EFGD algorithm, which employs a graph diffusion model integrated with cross-entropy gradient and experience feedback mechanisms.
By combining the experience feedback, the proposed EFGD can converge faster to find a better solution with higher rewards tailored to effectively solve the robust sensor placement optimization problem. Several simulation results show that EFGD outperforms traditional methods in optimizing sensor placements, thus ensuring robust and reliable CPPS operations.

\bibliography{Ref}

\end{document}